\newcolumntype{L}{X}
\newcolumntype{R}{>{\raggedleft\arraybackslash}X}
\newcolumntype{C}{>{\centering\arraybackslash}X}
\DeclarePairedDelimiter{\delimabs}{\lvert}{\rvert}
\DeclarePairedDelimiter{\delimnorm}{\lVert}{\rVert}
\DeclarePairedDelimiter{\delimpospart}{\lgroup}{\rgroup^+}
\DeclarePairedDelimiterX{\deliminner}[2]{\lange}{\rangle}{#1, #2}
\DeclarePairedDelimiter{\delimcardinality}{\lvert}{\rvert}
\DeclarePairedDelimiter{\delimset}{\lbrace}{\rbrace}
\DeclarePairedDelimiter{\delimtuple}{(}{)}
\DeclarePairedDelimiter{\delimlistt}{[}{]}
\DeclarePairedDelimiter{\delimfun}{(}{)}
\NewDocumentCommand{\abs}{sm}{\IfBooleanTF{#1}{\delimabs{#2}}{\delimabs*{#2}}}
\NewDocumentCommand{\norm}{sm}{\IfBooleanTF{#1}{\delimnorm{#2}}{\delimnorm*{#2}}}
\NewDocumentCommand{\pospart}{sm}{\IfBooleanTF{#1}{\delimpospart{#2}}{\delimpospart*{#2}}}
\NewDocumentCommand{\negpart}{sm}{\IfBooleanTF{#1}{\delimnetpart{#2}}{\delimnetpart*{#2}}}
\NewDocumentCommand{\inner}{sm}{\IfBooleanTF{#1}{\deliminner{#2}}{\deliminner*{#2}}}
\NewDocumentCommand{\cardinality}{sm}{\IfBooleanTF{#1}{\delimcardinality{#2}}{\delimcardinality*{#2}}}
\NewDocumentCommand{\set}{sm}{\IfBooleanTF{#1}{\delimset{#2}}{\delimset*{#2}}}
\NewDocumentCommand{\tuple}{sm}{\IfBooleanTF{#1}{\delimtuple{#2}}{\delimtuple*{#2}}}
\NewDocumentCommand{\closure}{sm}{\IfBooleanTF{#1}{\delimclosure{#2}}{\delimclosure*{#2}}}
\NewDocumentCommand{\listt}{sm}{\IfBooleanTF{#1}{\delimlistt{#2}}{\delimlistt*{#2}}}
\NewDocumentCommand{\fun}{smm}{\IfBooleanTF{#1}{{#2}\delimfun{#3}}{{#2}\delimfun*{#3}}}
\NewDocumentCommand{\funMacro}{smm}{\IfNoValueTF{#3}{#1}{\fun{#2}{#3}}}
\DeclareMathOperator{\ExistsOp}{\exists}
\DeclareMathOperator{\ForallOp}{\forall}
\NewDocumentCommand{\Exists}{gg}{\IfNoValueTF{#1}{\ExistsOp}{\ExistsOp #1. \, #2}}
\NewDocumentCommand{\Forall}{gg}{\IfNoValueTF{#1}{\ForallOp}{\ForallOp #1. \, #2}}
\newcommand{\unionSym}{\cup}
\newcommand{\unionBin}{\mathbin{\unionSym}}
\newcommand{\intersectionSym}{\cap}
\newcommand{\intersectionBin}{\mathbin{\intersectionSym}}
\newcommand{\UnionSym}{\bigcup}
\newcommand{\union}{\unionBin}
\newcommand{\intersection}{\intersectionBin}
\newcommand{\Union}{\UnionSym}
\newcommand{\Naturals}{\mathbb{N}}
\newcommand{\Reals}{\mathbb{R}}
\newcommand{\Distributions}{\mathcal{D}}
\NewDocumentCommand{\convto}{G{}}{\xrightarrow{#1}}
\NewDocumentCommand{\weakto}{G{}}{\xrightharpoonup{#1}}
\NewDocumentCommand{\weakstarto}{G{}}{\xrightharpoonup[*]{#1}}
\DeclareMathOperator{\supp}{supp}
\DeclareMathOperator*{\argmax}{arg\, max}
\NewDocumentCommand{\distributions}{d()}{\funMacro{\mathcal{D}}{#1}}
\newcommand{\gain}{g} 
\newcommand{\bias}{b}
\newcommand{\rmax}{r_{\max}}
\newcommand{\Mc}{\mathsf{M}}
\newcommand{\Mdp}{\mathcal{M}}
\newcommand{\states}{S}
\newcommand{\initstate}{s_\textrm{init}}
\newcommand{\trans}{\Delta}
\newcommand{\av}{\mathsf{Av}}
\newcommand{\rew}{r}
\newcommand{\path}{\rho}
\newcommand{\fpath}{w}
\newcommand{\straa}{\pi}
\newcommand{\straas}{\Pi}
\NewDocumentCommand{\actions}{d()}{{\IfNoValueTF{#1}{\mathit{Act}}{\fun{\mathit{Act}}{#1}}}}
\newcommand{\mec}{\mathsf{MEC}}
\newcommand{\scc}{\mathsf{SCC}}
\newcommand{\bscc}{\mathsf{BSCC}}
\newcommand{\attractor}{\mathsf{prob1}}
\newcommand{\pr}{\mathbb P}
\newcommand{\expected}{\mathbb{E}}
\newcommand{\expsucc}{\expected_\trans}
\tikzstyle{state}+=[minimum size = 6mm, inner sep=0,outer sep=1]
\tikzset{->,>=stealth'}
\newcommand{\secspace}{\vspace*{-0.75em}}
\newcommand{\subsecspace}{\vspace*{-0.5em}}
\newcommand{\subsubspace}{\vspace*{-1em}}
\renewcommand{\secspace}{}
\renewcommand{\subsecspace}{}
\renewcommand{\subsubspace}{}
\title{Efficient Strategy Iteration for Mean Payoff in Markov Decision Processes}
\author{Jan~K{\v r}et\'insk\'y \and Tobias~Meggendorfer}
\institute{Technical University of Munich}
\begin{document}
\maketitle %
\begin{abstract}
	Markov decision processes (MDPs) are standard models for probabilistic systems with non-deterministic behaviours.
	Mean payoff (or long-run average reward) provides a mathematically elegant formalism to express performance related properties.
	Strategy iteration is one of the solution techniques applicable in this context.
	While in many other contexts it is the technique of choice due to advantages over e.g.\ value iteration, such as precision or possibility of domain-knowledge-aware initialization, it is rarely used for MDPs, since there it scales worse than value iteration.
	We provide several techniques that speed up strategy iteration by orders of magnitude for many MDPs, eliminating the performance disadvantage while preserving all its advantages.
\end{abstract} %
\section{Introduction}

\emph{Markov decision processes (MDPs)}~\cite{Howard,FV97,Puterman} are a standard model for analysis of systems featuring both probabilistic and non-deterministic behaviour.
They have found rich applications, ranging from communication protocols to biological systems and robotics.
A classical objective to be optimized in MDPs is \emph{mean payoff} (or \emph{long-run average reward}).
It captures the reward we can achieve on average per step when simulating the MDP.
Technically, one considers partial averages (average over the first $n$ steps) and let the time $n$ go to infinity.
This objective can be used to describe performance properties of systems, for example, average throughput, frequency of errors, average energy consumption, etc.

\emph{Strategy} (or \emph{policy}) \emph{iteration} (or \emph{improvement}) (SI) is a dynamic-programming technique applicable in many settings, including optimization of mean payoff in MDPs~\cite{Howard,Puterman}, but also mean payoff games \cite{DBLP:journals/dam/BjorklundV07,DBLP:journals/ijfcs/BrimC12}, parity games \cite{DBLP:conf/cav/VogeJ00,DBLP:conf/csl/Schewe08,DBLP:journals/corr/abs-0806-2923,fearnley17}, simple stochastic games \cite{DBLP:conf/dimacs/Condon90}, concurrent reachability games \cite{DBLP:journals/mst/HansenIM14}, or stochastic parity games \cite{DBLP:conf/vmcai/HahnST017}.
The main principle of the technique is to start with an arbitrary strategy (or policy or controller of the system) and iteratively improve it locally in a greedy fashion until no more improvements can be done.
The resulting strategy is guaranteed to be optimal.

SI has several advantages compared to other techniques used in these contexts.
Most interestingly, domain knowledge or heuristics can be used to \emph{initialize} with a reasonable strategy, thus speeding up the computation to a fraction of the usual analysis time.
Further, SI is conceptually simple as it boils down to a search through a \emph{finite space} of memoryless deterministic strategies, yielding arguments for correctness and termination of the algorithm.

More specifically, in the context of MDPs, it has advantages over the other two standard techniques.
Firstly, compared to \emph{linear programming} (LP), SI \emph{scales} much better.
LP provides a rich framework, which is able to encode many optimization problems on MDPs and in particular mean payoff.
However, although the linear programs are typically of polynomial size and can be also solved in polynomial time, such procedures are not very useful in practice.
For larger systems the solvers often time out or run out of memory already during the construction of the linear program.
Furthermore, SI ensures that the current lower bounds on the mean payoff is \emph{monotonically improving}.
Consequently, the iteration can be stopped at any point, yielding a strategy at least as good as all the previous iterations.

Secondly, compared to \emph{value iteration} (VI), SI provides a \emph{precise solution}, whereas VI is only optimal in the limit and the number of iterations before the numbers can be rounded in order to obtain a precise solution is very high \cite{krish-survey}.
Furthermore, stopping criteria for VI are limited to special cases or are very inefficient.
Consequently, VI is practically used to produce results that may be erroneous even for simple, realistic examples in verification, see e.g.~\cite{haddad2014reachability}.


On the other hand, the main disadvantage of SI, in particular for mean payoff, is its \emph{scalability}.
Although SI scales better than LP, it is only rarely the case that SI is faster than VI.
Firstly, in the worst case, we have to examine \emph{exponentially} many strategies \cite{DBLP:conf/icalp/Fearnley10}, in contrast to the discounted case, which is polynomial (for a fixed discount factor) \cite{DBLP:journals/mor/Ye11a} even for games \cite{DBLP:journals/jacm/HansenMZ13}.
However, note that even for parity games it was for long not known \cite{DBLP:conf/lics/Friedmann09} whether all SI algorithms exhibit this property since the number of improvements is only rarely high in practice.
Secondly, and more importantly, the \emph{evaluation} of each strategy necessary for the greedy improvement takes enormous time since large systems of linear equations have to be solved.
Consequently, VI typically is much faster than SI to obtain a similar precision, although it may also need an exponential number of updates.

This scalability limitation is even more pronounced by the following contrast.
On the one hand, mean payoff games, parity games, and simple stochastic games are not known to be solvable in polynomial time, hence the exponential-time SI is an acceptable technique for these models.
On the other hand, for problems on MDPs that are solvable in polynomial time, such as mean payoff, the exponential-time SI becomes less appealing.
In summary, we can only afford to utilize the mentioned advantages of SI for MDPs if we make SI perform well in practice.

This paper suggest several heuristics and opens new directions to increase performance of SI for MDPs, in particular in the setting of mean payoff.
Our contribution is the following:
\begin{itemize}
	\item
	We present several techniques to significantly speed up SI in many cases, most importantly the evaluation of the current strategy.
	The first set of techniques (in Section~\ref{sec:topological_opt}) is based on maximal end component decomposition of the MDP and strongly connected component decomposition of the Markov chain induced by the MDP and the currently considered strategy.
	The second class (in Section \ref{sec:heuristic_si}) is based on approximative techniques to compute mean payoff in these Markov chains.
	Both variants reduce the time taken by the strategy evaluation.
	Finally, we combine the two approaches in a non-trivial way in Section~\ref{sec:mec_heuristics}, giving rise to synergic optimizations and opening the door for approximation techniques.
	\item
	We provide experimental evaluation of the proposed techniques and compare to the approaches from literature.
	We show experimental evidence that our techniques are speeding up SI by orders of magnitude and make its performance (i)~on par with VI, the prevalent technique which, however, only provides approximate solutions, and (ii)~incomparably more scalable than the precise technique of LP.
\end{itemize}
\paragraph{Further related work}
Strategy iteration for MDPs has been extensively studied~\cite{Howard,Puterman,Fearnley}.
Performance of SI for MDPs has been mainly improved in the discounted total reward case by, e.g., approximate evaluation of the strategy using iterative methods of linear algebra~\cite{Shlakhter}, model reduction by adaptive state-space aggregation~\cite{milan} or close-to-optimal initialization~\cite{Solis}; for an overview see \cite{Bertsekas}.
The treatment of the undiscounted case has focused on unichain MDPs \cite{unichains,Puterman}.
Apart from solving the MDPs modelling probabilistic systems, the technique has found its applications in other domains, too, for example program analysis \cite{Helmut}. %
\secspace %
\section{Preliminaries} \label{sec:prelim}
\secspace %

In this section, we introduce some central notions.
Furthermore, for the reader's convenience, Appendix~\ref{sec:linear_algebra} recalls some technical notions from linear algebra. 

A \emph{probability distribution} on a finite set $X$ is a mapping $\rho: X \to [0,1]$, such that $\sum_{x\in X} \rho(x) = 1$.
Its \emph{support} is denoted by $\supp(\rho) = \set{x \in X \mid \rho(x) > 0}$.
$\Distributions(X)$ denotes the set of all probability distributions on $X$.
\begin{definition}
	A \emph{Markov chain (MC)} is a tuple $\Mc = (\states, \initstate, \trans, \rew)$, where $\states$ is a finite set of \emph{states}, $\initstate \in \states$ is the \emph{initial} state, $\trans: \states \to \distributions(\states)$ is a \emph{transition function} that for each state $s$ yields a probability distribution over successor states and $\rew : \states \to \Reals^{\geq 0}$ is a \emph{reward function}, assigning rewards to states.
\end{definition}
\begin{definition}
	A \emph{Markov decision process (MDP)} is a tuple of the form $\Mdp = (\states, \initstate, \actions, \av, \trans, \rew)$, where $\states$ is a finite set of \emph{states}, $\initstate \in \states$ is the \emph{initial} state, $\actions$ is a finite set of \emph{actions}, $\av: \states \to 2^{\actions}$ assigns to every state a set of \emph{available} actions, $\trans: \states \times \actions \to \distributions(\states)$ is a \emph{transition function} that for each state $s$ and action $a \in \av(s)$ yields a probability distribution over successor states and $\rew : \states \times \actions \to \Reals^{\geq 0}$ is a \emph{reward function}, assigning rewards to state-action pairs.

	Furthermore, we assume w.l.o.g. that actions are unique for each state, i.e.\ $\av(s) \intersection \av(s') = \emptyset$ for $s \neq s'$.\footnote{The usual procedure of achieving this in general is to replace $\actions$ by $\states \times \actions$ and adapting $\av$, $\trans$, and $\rew$ appropriately.
	For the sake of readability, we omit this restriction when drawing examples.}
\end{definition}
For ease of notation, we overload functions mapping to distributions $f: Y \to \Distributions(X)$ by $f: Y \times X \to [0, 1]$, where $f(y, x) := f(y)(x)$.
For example, instead of $\trans(s)(s')$ and $\trans(s, a)(s')$ we write $\trans(s, s')$ and $\trans(s, a, s')$, respectively.
Further, given some MC $\Mc$, a function $f : \states \to \Reals$ and set of states $C \subseteq \states$, we define $\expsucc^C(f, s) := \sum_{s' \in C} \trans(s, s') f(s')$, i.e.\ the weighted sum of $f$ over all the successors of $s$ in $C$.
Analogously, for some MDP $\Mdp$, we set $\expsucc^C(f, s, a) := \sum_{s' \in C} \trans(s, a, s') f(s')$.
For $C = \states$, we omit the superscript, i.e.\ $\expsucc(f, s) := \expsucc^\states(f, s)$ and $\expsucc(f, s, a) := \expsucc^\states(f, s, a)$.

An \emph{infinite path} $\path$ in a Markov chain is an infinite sequence $\path = s_0 s_1 \dots \in \states^\omega$, such that for every $i \in \Naturals$ we have that $\trans(s_i, s_{i+1}) > 0$.
A \emph{finite path} $\fpath = s_0 s_1 \dots s_n \in \states^*$ is a finite prefix of an infinite path.
Similarly, an \emph{infinite path} in an MDP is some infinite sequence $\path = s_0 a_0 s_1 a_1 \dots \in (\states \times \actions)^\omega$, such that for every $i \in \Naturals$, $a_i\in \av(s_i)$ and $\trans(s_i,a_i, s_{i+1}) > 0$.
\emph{Finite path}s are defined analogously as elements of $(\states \times \actions)^* \times \states$.

A Markov chain together with a state $s$ induces a unique probability distribution $\pr_s$ over measurable sets of infinite paths \cite[Ch.~10]{BaierBook}.
For some $C \subseteq \states$, we write $\Diamond C$ to denote the set of all paths which eventually reach $C$, i.e. $\Diamond C = \set{\path = s_0 s_1 \dots \mid \exists i \in \Naturals.~s_i \in C}$, which is measurable \cite[Sec.~10.1.1]{BaierBook}.

A \emph{strategy} on an MDP is a function $\straa: (\states \times \actions)^*\times S \to \distributions(\actions)$, which given a finite path $\fpath = s_0 a_0 s_1 a_1 \dots s_n$ yields a probability distribution $\straa(\fpath) \in \distributions(\av(s_n))$ on the actions to be taken next.
We call a strategy \emph{memoryless randomized} (or \emph{stationary}) if it is of the form $\straa: \states \to \distributions(\actions)$, and \emph{memoryless deterministic} (or \emph{positional}) if it is of the form $\straa: \states \to \actions$.
We denote the set of all strategies of an MDP by $\straas$, and the set of all memoryless deterministic strategies as $\straas^{\mathsf{MD}}$.
Note that $\straas^{\mathsf{MD}}$ is finite, since at each state there exist only finitely many actions to choose from.
Fixing any positional strategy $\straa$ induces a Markov chain where $\trans(s, s') = \sum_{s \in \av(s)} \straa(s, a) \cdot \trans(s, a, s')$ and $\rew(s) = \sum_{a \in \av(s)} \straa(s, a) \cdot \rew(s, a)$.

Fixing a strategy $\straa$ and an initial state $s$ on an MDP $\Mdp$ also gives a unique measure $\pr^\straa_s$ over infinite paths~\cite[Sec.~2.1.6]{Puterman}.
The expected value of a random variable $F$ then is defined as $\expected^\straa_s[F] = \int F\ d\,\pr^\straa_s$.
\subsubspace %
\subsubsection{Strongly connected components and end components}
A non-empty set of states $C \subseteq \states$ in a Markov chain is \emph{strongly connected} if for every pair $s, s' \in C$ there is a path from $s$ to $s'$, possibly of length zero.
Such a set $C$ is a \emph{strongly connected component} (SCC) if it is inclusion maximal, i.e.\ there exists no strongly connected $C'$ with $C \subsetneq C'$.
Note that each state of an MC belongs to exactly one SCC\footnote{Some authors deliberately exclude so called \enquote{trivial} or \enquote{transient} SCCs, which are single states without a self-loop.}.
A SCC is called \emph{bottom strongly connected component} (BSCC) if additionally no path leads out of it, i.e.\ for $s \in C, s' \in S \setminus C$ we have $\trans(s, s') = 0$.
The set of SCCs and BSCCs in a MC $\Mc$ are denoted by $\scc(\Mc)$ and $\bscc(\Mc)$, respectively.

The concept of SCCs is generalized to MDPs by so called \emph{(maximal) end components}.
A pair $(T, A)$, where $\emptyset \neq T \subseteq S$ and $\emptyset \neq A \subseteq \Union_{s\in T} \av(s)$, is an \emph{end component} of an MDP $\Mdp$ if (i)~for all $s \in T, a \in A \intersection \av(s)$ we have $\supp(\trans(s,a)) \subseteq T$, and (ii)~for all $s, s' \in T$ there is a finite path $\fpath = s a_0 \dots a_n s' \in (T \times A)^* \times T$, i.e.\ the path stays inside $T$ and only uses actions in $A$.
Note that we assumed actions to be unique for each state.

Intuitively, an end component describes a set of states for which a particular strategy exists such that all possible paths remain inside these states.
An end component $(T, A)$ is a \emph{maximal end component (MEC)} if there is no other end component $(T', A')$ such that $T \subseteq T'$ and $A \subseteq A'$.
Given an MDP $\Mdp$, the set of its MECs is denoted by $\mec(\Mdp)$.

Finally, given an MDP $\Mdp$ let $(T, A) \in \mec(\Mdp)$ some MEC in it.
By picking some initial state $\initstate' \in T$, defining the straightforward restrictions of $\av$ and $\trans$ by $\av' : T \to 2^A$, $\av'(s) := \av(s) \intersection A$ and $\trans' : T \times A \to \Distributions(T)$, $\trans'(s, a) := \trans(s, a)$ one obtains the \emph{restricted MDP} $\Mdp' = (T, \initstate', A, \av', \trans')$.
\begin{remark} \label{rem:scc_and_mec_decomposition}
	For a Markov chain $\Mc$, the computation of $\scc(\Mc)$, $\bscc(\Mc)$ and a topological ordering of the SCCs can be achieved in linear time w.r.t. the number of states and transitions by, e.g., Tarjan's algorithm~\cite{tarjan1972depth}.
	Similarly, the MEC decomposition of an MDP can be computed in polynomial time \cite{CY95}.
\end{remark}
\subsubspace %
\subsubsection{Long-run average reward} (also called \emph{mean payoff}) of a strategy $\straa$ intuitively describes the optimal reward we can expect on average per step when simulating the MDP according to $\straa$.
In the following, we will only consider the case of maximizing the average reward, but the presented methods easily can be transferred to the minimization case.

Formally, let $R_i$ be a random variable which for an infinite path $\path = s_0 a_0 s_1 a_1 \dots$ returns $R_i(\path) = \rew(s_i, a_i)$, i.e.\ the reward obtained at step $i \geq 0$.
Given a strategy $\straa$, the $n$-step (maximal) average reward then is defined as $\gain^\straa_n(s) = \expected^\straa_s (\frac1n\sum_{i=0}^{n-1} R_i)$.
The \emph{long-run average reward} (in this context also traditionally called \emph{gain} \cite{Puterman}) of the strategy $\straa$ is $\gain^\straa(s) = \liminf_{n\to\infty} \gain^\straa_n(s)$.\footnote{The $\liminf$ is used since the limit may not exist in general for an arbitrary strategy.}
Consequently, the \emph{long-run average reward} (or \emph{gain}) of a state $s$ is defined as
\begin{equation*}
	\gain^*(s) := \sup_{\straa \in \straas} \gain^\straa(s) = \sup_{\straa \in \straas} \liminf_{n\to\infty}\expected^\straa_s \left (\frac1n\sum_{i=0}^{n-1} R_i \right) .
\end{equation*}
%
For finite MDPs $\gain^*(s)$ in fact is attained by a memoryless deterministic strategy $\straa^*\in \straas^{\mathsf{MD}}$ and it further is the \emph{limit} of the $n$-step average reward~\cite{Puterman}.
Formally,
\begin{equation*}
	\gain^*(s) = \max_{\straa \in \straas^{\mathsf{MD}}} \gain^\straa(s) = \lim_{n\to\infty} \gain^{\straa^*}_n(s).
\end{equation*}
With this in mind, we now only consider memoryless deterministic strategies. %
\secspace %
\section{Strategy Iteration} \label{sec:si_basic}
\secspace %

One way of computing the optimal gain of an MDP (i.e. determining the optimal gain of each state) is \emph{strategy iteration} (or \emph{policy iteration} or \emph{strategy improvement}).
The general approach of strategy iteration is to (i)~fix a strategy, (ii)~evaluate it and (iii)~improve each choice greedily, repeating the process until no improvement is possible any more.
For an in depth theoretical expos\'e of strategy iteration for MDPs, we refer to e.g.\ \cite[Sec.~9.2]{Puterman}.
Here, we briefly recall the necessary definitions.

\subsecspace %
\subsubsection{Gain and bias}
As mentioned, the second step of strategy iteration requires to evaluate a given strategy.
By investigating the Markov chain $\Mc = (\states, \initstate, \trans, \rew)$ induced by the MDP $\Mdp$ together with a strategy $\straa \in \straas^{\mathsf{MD}}$, one can employ the following system of linear equations characterizing the gain $\gain$~\cite{Puterman}:
\begin{equation*}
	\begin{aligned}
		\gain(s) & = \sum_{s' \in \states} \trans(s, s') \cdot \gain(s') = \expsucc(\gain, s) \quad \forall s \in S, \\
		\bias(s) & = \sum_{s' \in \states} \trans(s, s') \cdot \bias(s') + \rew(s) - g(s) = \expsucc(\bias, s) + \rew(s) - g(s) \quad \forall s \in S.
	\end{aligned}
\end{equation*}
A solution $(\gain, \bias)$ to these \emph{gain/bias equations} yields the gain $\gain$ and the so called \emph{bias} $\bias$ of the induced Markov chain, which we also refer to as gain $\gain_\straa$ and bias $\bias_\straa$ of the corresponding strategy $\straa$.
Intuitively, the bias relates to the total expected deviation from the gain until the obtained rewards \enquote{stabilize} to the gain.
Note that the equations uniquely determine the gain but not the bias.
We refer the reader to \cite[Sec.~9.1.1, Sec.~9.2.1]{Puterman} for more detail but highlight the following result.
A unique solution can be obtained by adding the constraints $b(s_i) = 0$ for one arbitrary but fixed state $s_i$ in each BSCC~\cite[Condition~9.2.3]{Puterman}.
Note this condition requires to fix the bias of the \enquote{first} state in the BSCC to zero.
But, as the states can be numbered arbitrarily, any state of the BSCC is eligible.
This is also briefly touched upon in the corresponding chapter of \cite{Puterman}.
Unfortunately, this results in a non-square system matrix.

\begin{algorithm}[t]
	\caption{\textsc{SI}}
	\label{alg:si}
	\begin{algorithmic}[1]
		\Require MDP $\Mdp = (\states, \initstate, \actions, \av, \trans, \rew)$.
		\Ensure $(g^*, \straa^*)$, s.t.\ $g^*$ is the optimal gain of the MDP and is obtained by $\straa^*$.
		\State Set $n = 0$ and pick an arbitrary strategy $\straa_0 \in \straas^{\mathsf{MD}}$.
		\State Obtain $\gain_n$ and $\bias_n$ which satisfy the gain/bias equations. \label{alg:si:line:eval}
		\State Let \label{alg:si:line:gain_opt_actions} \Comment{Gain improvement}
		\begin{equation*}
			\av_{\gain_n}(s) = \argmax_{a \in \av(s)} \expsucc(\gain_n, s, a),
		\end{equation*}
		all actions maximizing the successor gains.
		\State Pick $\straa_{n+1} \in \straas^{\mathsf{MD}}$ s.t.\ $\straa_{n+1}(s) \in \av_{\gain_n}(s)$, setting $\straa_{n+1}(s) = \straa_n(s)$ if possible. \label{alg:si:line:gain_improve}
		\If{$\straa_{n+1} \neq \straa_n$} increment $n$ by 1 and go to Line~\ref{alg:si:line:eval}.
		\EndIf
		\State Pick $\straa_{n+1} \in \straas^{\mathsf{MD}}$ which satisfies \Comment{Bias improvement} \label{alg:si:line:bias_improve}
		\begin{equation*}
			\straa_{n+1}(s) \in \argmax_{a \in \av_{\gain_n}(s)} \rew(s, a) + \expsucc(\bias_n, s, a),
		\end{equation*}
		again setting $\straa_{n+1}(s) = \straa_n(s)$ if possible.
		\If{$\straa_{n+1} \neq \straa_n$} increment $n$ by 1 and go to Line~\ref{alg:si:line:eval}.
		\EndIf
		\State \Return $(\gain_{n+1}, \straa_{n+1})$.
	\end{algorithmic}
\end{algorithm}

With these results, the strategy iteration for the average reward objective on MDPs is defined in Algorithm~\ref{alg:si}\footnote{Note that the procedure found in~\cite[Sec.~9.2.1]{Puterman} differs from our Algorithm in Line~\ref{alg:si:line:bias_improve}.
That procedure indeed is erroneous, as the bias is improved over all available actions.
Optimizing the bias only over all actions which already optimize the gain is indeed vital to the idea.
The proofs provided in the corresponding chapter reflect this and actually prove the correctness of the algorithm as presented here.}.
Reasoning of \cite[Sec.~9.2.4]{Puterman} yields correctness.
\begin{theorem} \label{stm:si_correct}
	The strategy iteration presented in Algorithm~\ref{alg:si} terminates with a correct result for any input MDP.
\end{theorem}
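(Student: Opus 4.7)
The plan is to follow the classical analysis of policy iteration for average-reward MDPs, as sketched in \cite[Sec.~9.2.4]{Puterman}, adapted to the presentation of Algorithm~\ref{alg:si}. I would split the argument into three parts: well-definedness of the evaluation step, strict monotonic improvement between successive iterates, and optimality upon termination.

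First I would verify that Line~\ref{alg:si:line:eval} is well-defined. Fixing $\straa_n \in \straas^{\mathsf{MD}}$ induces a finite Markov chain, so its BSCC decomposition exists; on each BSCC the gain is constant and equals the stationary long-run average reward, while transient states get the weighted average of their reachable BSCC gains. Once one bias value per BSCC is fixed to zero, the bias equation has a unique solution by \cite[Condition~9.2.3]{Puterman}. Hence $(\gain_n, \bias_n)$ is well-defined for every $n$.

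The technical core of the proof is a \emph{value-comparison lemma}: if $\straa'$ differs from $\straa$ only at states where the new action $a$ satisfies $\expsucc(\gain_\straa, s, a) \geq \gain_\straa(s)$, then $\gain_{\straa'} \geq \gain_\straa$ pointwise; moreover if equality holds throughout and additionally $\rew(s,a) + \expsucc(\bias_\straa, s, a) \geq \gain_\straa(s) + \bias_\straa(s)$ at the switched states, then also $\bias_{\straa'} \geq \bias_\straa$ pointwise. The standard derivation multiplies the gain/bias equations of $\straa'$ by the stationary distribution induced by $\straa'$ and telescopes. Combined with the tie-breaking rule ($\straa_{n+1}(s)=\straa_n(s)$ whenever possible), any change $\straa_{n+1}\neq \straa_n$ in Line~\ref{alg:si:line:gain_improve} or Line~\ref{alg:si:line:bias_improve} corresponds to some state where the relevant inequality is strict, and hence produces a strict improvement of $(\gain_n,\bias_n)$ in the lexicographic order on $\Reals^{\states}\times\Reals^{\states}$. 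Since $\straas^{\mathsf{MD}}$ is finite and strategies cannot repeat once strictly improved upon, the algorithm terminates. Correctness at termination is then immediate: if neither improvement step changes $\straa_n$, then for every $s$ the current action is gain-maximizing and, among gain-maximizing actions, bias-maximizing, which are precisely the average-reward optimality equations; any solution to these equations yields the optimal gain $\gain^*$ and an optimal memoryless deterministic witness, exactly as needed.

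The main obstacle I expect is the bias-improvement half of the comparison lemma: one must argue simultaneously that restricting the bias optimization to $\av_{\gain_n}(s)$ preserves the gain exactly (this is where the footnote correcting the formulation in \cite{Puterman} is essential), and that a strictly better bias choice propagates to a globally strictly larger bias vector rather than merely a weakly larger one. Handling the BSCC structure carefully in this argument, in particular ensuring that the normalization $b(s_i)=0$ does not mask a genuine improvement, is the delicate step.
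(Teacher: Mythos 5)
Your proposal is correct and follows essentially the same route as the paper, which simply defers to the reasoning of \cite[Sec.~9.2.4]{Puterman} (in particular Theorem~9.2.6, the monotone-improvement result you reconstruct as your comparison lemma) combined with finiteness of $\straas^{\mathsf{MD}}$. The one imprecision — stating the bias conclusion as a pointwise inequality $\bias_{\straa'} \geq \bias_\straa$, which is delicate because the bias is only determined up to a per-BSCC constant and the BSCC structure may shift under a strategy change — is exactly the subtlety you flag at the end, and it is handled in Puterman's multichain statement rather than needing a new idea.
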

It might seem unintuitive why the bias improvement in Line~\ref{alg:si:line:bias_improve} is necessary, since we are only interested in the gain after all.
Intuitively, when optimizing the bias the algorithm seeks to improve the expected \enquote{bonus} until eventually stabilizing without reducing the obtained gain.
This may lead to actually improving the overall gain, as illustrated in Appendix~\ref{sec:bias_improvement_necessary}. 

\subsubspace %
\subsubsection{Advantages and drawbacks of strategy iteration}
Compared to other methods for solving the average reward objective, e.g.\ value iteration \cite{cav,krish-survey}, strategy iteration offers some advantages:
\begin{enumerate}[(i)]
	\item A \emph{precise solution} can be obtained, compared to value iteration which is only optimal in the limit.
	\item The gain of the strategy is \emph{monotonically improving}, the iteration can be stopped at any point, yielding a strategy at least as good as the initial one.
	\item It therefore is easy to \emph{introduce knowledge} about the model or results of some pre-computation by initializing the algorithm with a sensible strategy.
	\item On some models, strategy iteration performs \emph{significantly faster} than value iteration, as outlined in Appendix~\ref{sec:strategy_faster_than_value_iteration}. 
	\item The algorithm searches through the \emph{finite space} of memoryless deterministic strategies, simplifying termination and correctness proofs.
\end{enumerate}
But on the other hand, the naive implementation of strategy iteration as presented in Algorithm~\ref{alg:si} has several drawbacks:
\begin{enumerate}[(i)]
	\item In order to determine the precise gain by solving the gain/bias equations, one necessarily has to determine the bias, too.
	Therefore, the algorithm has to determine \emph{both gain and bias} in each step, while often only the gain is actually used for the improvement.
	\item For reasonably sized models the equation system becomes \emph{intractably large}.
	In the worst case, it contains $2 n^2 + n$ non-zero entries and even for standard models there often are significantly more than $n$ non-zero entries.
	\item Furthermore, the gain/bias equation system is under-determined, ruling out a lot of fast solution methods for linear equation systems.
	Uniqueness can be introduced by adding several rows, which results in the matrix being non-square, again ruling out a lot of solution methods.
	Experimental results suggest that this equation system furthermore has rather large condition numbers (see Appendix~\ref{sec:linear_algebra}) even for small, realistic models, leading to numerical instabilities\footnote{On crafted models with less than 10 states we observed numerical errors leading to non-convergence and condition numbers of up to $10^5$.}. 
	\item Lastly, the equation system is solved \emph{precisely} for every improvement step, which often is unnecessary.
	To arrive at a precise solution, we often only need to identify states in which the strategy is not optimal, compared to having a precise measure of how non-optimal they are.
\end{enumerate}
In the following two sections, we present approaches and ideas tackling each of the mentioned problems, arriving at procedures which perform orders of magnitude faster than the original approach. %
\secspace %
\section{Topological optimizations} \label{sec:topological_opt}
\secspace %

Our first set of optimizations is based on various topological arguments about both MDPs and MCs.
They are used to eliminate unnecessary redundancies in the equation systems and identify sub-problems which can be solved separately, eventually leading to small, full-rank equation systems.
Reduction in size and removal of redundancies naturally lead to significantly better condition numbers, which we also observed in our experiments.

\subsecspace %
\subsection{MEC decomposition} \label{sec:mec_dec}
\subsecspace %

We presented a variant of this method in our previous work \cite{cav} in the context of value iteration.
Due to space constraints we will only give a short overview of the idea.

The central idea is that all states in a MEC of some MDP have the same optimal gain~\cite[Sec.~9.5]{Puterman}\footnote{Restricting a general MDP to a MEC results in a \enquote{communicating} MDP.}.
Intuitively this is the case since any state in a particular MEC can reach every other state of the MEC almost surely.
For some MEC $M$ we define $g^*(M)$ to be this particular optimal value and call it the \emph{gain of the MEC}.
The optimal gain of the whole MDP then can be characterized by
\begin{equation*}
	\gain^*(s) = \max_{\straa \in \straas^{\mathsf{MD}}} \sum_{M \in \mec(\Mdp)} \pr^\straa_s[\Diamond\Box M] \cdot \gain^*(M)
\end{equation*}
where $\Diamond\Box M$ denotes the measurable set of paths that eventually remain within $M$.
This leads to a divide-and-conquer procedure for determining the gain of an MDP.
Conceptually, the algorithm first computes the MEC decomposition~\cite{CY95}, then for each MEC $M$ determines its gain $\gain^*(M)$ by strategy iteration and finally solves a reachability query on the \emph{weighted MEC quotient} $\Mdp^f$ by, e.g., strategy iteration or (interval) value iteration \cite{haddad2014reachability,atva}.

The weighted MEC quotient $\Mdp^f$ is a modification of the standard \emph{MEC quotient} of~\cite{DeAlfaro1997}, which for each MEC $M$ introduces an action leading from the collapsed MEC $M$ to a designated target sink $s_+$ with probability $f(M)$ (which is proportional to $\gain^*(M)$) and a non-target sink $s_-$ with the remaining probability.
With this construction, we can relate the maximal probability of reaching $s_+$ to the maximal gain in the original MDP.
For a formal definition, see Appendix~\ref{sec:weighted_quotient}. 

\begin{algorithm}[tb]
	\caption{\textsc{MEC-SI}}
	\label{alg:mec_dec_si}
	\begin{algorithmic}[1]
		\Require MDP $\Mdp = (\states, \initstate, \actions, \av, \trans, \rew)$.
		\Ensure The optimal gain $g^*$ of the MDP.
		\State $f \gets \emptyset$, $\rmax \gets \max_{s \in \states, a \in \av(s)} \rew(s, a)$.
		\For {$M_i = (T_i, A_i) \in \mec(\Mdp)$}
			\State Compute $\gain^*(M_i)$ of the MEC by applying Algorithm~\ref{alg:si} on the restricted MDP. \label{alg:mec_dec_si:line:mec_gain}
			\State Set $f(M_i) \gets \gain^*(M_i) / \rmax$.
		\EndFor
		\State Compute the weighted MEC quotient $\Mdp^{f}$.
		\State Compute $p \gets \pr^{\max}_{\Mdp^f}(\Diamond \set{s_+})$.
		\State \Return $\rmax \cdot p$
	\end{algorithmic}
\end{algorithm}

Using this idea, we define the first optimization of strategy iteration in Algorithm~\ref{alg:mec_dec_si}.
Its correctness follows from \cite[Theorem 2]{cav}.
Since we are only concerned with the average reward and each state in the restriction can reach any other (under some strategy), the initial state we pick for the restriction in Line~\ref{alg:mec_dec_si:line:mec_gain} is irrelevant.
Note that while the restricted MDP consists of a single MEC, an induced Markov chain may still contain an arbitrary number of (B)SCCs.

This algorithm already performs significantly better on a lot of models, as shown by our experimental evaluation in Section~\ref{sec:exper}.
But, as to be expected, on models with large MECs this algorithm still is rather slow compared to other approaches and may even add additional overhead when the whole model is a single MEC.
To this end, we will improve strategy iteration in general.
To combine these optimized variants with the ideas of Algorithm~\ref{alg:mec_dec_si}, one can simply apply them in Line~\ref{alg:mec_dec_si:line:mec_gain}.

\subsecspace %
\subsection{Using strongly connected components} \label{sec:sccs}
\subsecspace %

The underlying ideas of the previous approach are independent of the procedure used to determine $g^*(M)$.
Naturally, this optimization does not exploit any specific properties of strategy iteration to achieve the improvement.
In this section, we will therefore focus on improving the core principle of strategy iteration, namely the evaluation of a particular strategy $\straa$ on some MDP $\Mdp$.
As explained in Section~\ref{sec:si_basic}, this problem is equivalent to determining the gain and bias of some Markov chain $\Mc$.
Hence we fix such a Markov chain $\Mc$ throughout this section and present optimized methods for determining the required values precisely.

\subsubspace %
\subsubsection{BSCC compression}
In this approach, we try to eliminate superfluous redundancies in the equation system.
The basic idea is that all states in some BSCC have the same optimal gain.
Moreover, the same gain is achieved in the \emph{attractor} of $B$, i.e.\ all states from which almost all runs eventually end up in $B$.
\begin{definition}[Attractor]
	Let $\Mc$ be some Markov chain and $C \subseteq S$ some set of states in $\Mc$. The \emph{attractor} of $C$ is defined as
	\begin{equation*}
		\attractor(C) := \set{s \in S \mid \pr_s[\Diamond C] = 1},
	\end{equation*}
	i.e.\ the set of states which almost surely eventually reach $C$.
\end{definition}
\begin{lemma} \label{stm:gain_attractor_equal}
	Let $\Mc$ be a Markov chain and $B$ a BSCC.
	Then $\gain(s) = \gain(s')$ for all $s, s' \in \attractor(B)$.
\end{lemma}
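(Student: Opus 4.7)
The plan is to reduce the statement to the standard fact that in a finite Markov chain, the long-run average reward of any state depends only on the distribution over BSCCs it reaches, together with the common gain value inside each BSCC.

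First, I would recall that within any single BSCC $B'$ the gain is constant: by the ergodic theorem for finite Markov chains (or equivalently, by noting that the stationary distribution supported on $B'$ is unique and the time averages $\tfrac{1}{n}\sum_{i=0}^{n-1} R_i$ converge almost surely to a deterministic limit $\gain(B')$ regardless of the starting state inside $B'$), all $s \in B'$ satisfy $\gain(s) = \gain(B')$. This covers the subcase $B' = B$ and handles states that already lie in $B$.

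Next, I would argue the decomposition
\begin{equation*}
    \gain(s) \;=\; \sum_{B' \in \bscc(\Mc)} \pr_s\bigl[\Diamond B'\bigr] \cdot \gain(B')
\end{equation*}
for every state $s$ of $\Mc$. The justification is that, in a finite Markov chain, the set $\Diamond \Union_{B' \in \bscc(\Mc)} B'$ has $\pr_s$-measure one, and once a path enters some $B'$ it stays there forever (since $B'$ is bottom). Applying the strong law of large numbers for Markov chains on the tail of the path, the time-average $\tfrac{1}{n}\sum_{i=0}^{n-1} R_i$ converges $\pr_s$-almost surely to the random variable that takes value $\gain(B')$ on $\Diamond B'$. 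Taking expectations (using bounded rewards and dominated convergence on the $\liminf$) yields the displayed formula.

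Finally, for $s \in \attractor(B)$ we have $\pr_s[\Diamond B] = 1$ by definition, hence $\pr_s[\Diamond B'] = 0$ for every other BSCC $B' \neq B$, so the decomposition collapses to $\gain(s) = \gain(B)$. Since the right-hand side is independent of the particular $s \in \attractor(B)$, the claim follows. The only subtle point—which I would state carefully but not belabor—is the justification of the decomposition formula from the $\liminf$ definition of gain; once that is in place, the rest is immediate from the definition of $\attractor(B)$.
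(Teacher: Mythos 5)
Your proof is correct and rests on the same core decomposition as the paper's, namely $\gain(s) = \sum_{B' \in \bscc(\Mc)} \pr_s[\Diamond B'] \cdot \gain(B')$ together with the observation that $\pr_s[\Diamond B] = 1$ on $\attractor(B)$; the only difference is how the two ingredients are justified. The paper re-uses infrastructure it already built: it views $\Mc$ as a degenerate MDP with a single action per state, notes that BSCCs of $\Mc$ then coincide with MECs of that MDP, and cites the MEC-decomposition identity from Section~\ref{sec:mec_dec} together with \cite[Sec.~9.5]{Puterman} for constancy of gain within a MEC. You instead derive both facts from first principles in the Markov-chain setting (ergodic theorem inside an irreducible BSCC; strong law of large numbers plus dominated convergence to pass from almost-sure convergence of the time averages to convergence of their expectations, which correctly handles the $\liminf$ in the definition of $\gain$). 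Your route is more elementary and self-contained, which is useful for a reader without the MDP machinery; the paper's is a one-liner because it piggybacks on results it needs elsewhere anyway. Both are sound.
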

\begin{proof}
	When interpreting the MC as a degenerate MDP with $\cardinality{\av(s)} = 1$ for all $s$, the gain of the MC coincides with the optimal gain of this MDP and each BSCC in the original MC is a MEC in the MDP.
	Using the reasoning from Section~\ref{sec:mec_dec} and \cite[Sec.~9.5]{Puterman}, we obtain that all states in $\attractor(B)$ have the same gain.
	\qed
\end{proof}
Therefore, instead of adding one gain variable per state to the equation system, we \enquote{compress} the gain of all states in the same BSCC (and its attractor) into one variable.
Formally, the reduced equation system is formulated as follows.

Let $\set{B_1, \dots, B_n} = \bscc(\Mc)$ be the BSCC decomposition of the Markov chain.
Further, define $A_i := \attractor(B_i)$ the attractors of each BSCC and $T := \Union_{i=1}^n A_i$ the set of all states which don't belong to any attractor.
The \emph{BSCC compressed gain/bias equations} then are defined as
\begin{equation}
	\label{eq:compressed_optimiality}
	\begin{aligned}
		\gain(s) & = \expsucc^{T'}(\gain, s) + {\sum}_{A_i} \expsucc^{A_i}(\gain_i, s) \quad \forall s \in T, \\
		\bias(s) & = \expsucc^{A_i}(\bias, s) + \rew(s) - \gain_i \quad \forall 1 \leq i \leq n, s \in A_i, \\
		\bias(s) & = \expsucc(\bias, s) + \rew(s) - \gain(s) \quad \forall s \in T, \\
		\bias(s_i) & = 0 \quad \text{for one arbitrary but fixed $s_i \in B_i$, } \forall 1 \leq i \leq n.
	\end{aligned}
\end{equation}
Applying the reasoning of Lemma~\ref{stm:gain_attractor_equal} immediately gives us correctness.
\begin{corollary} \label{stm:compression_correct}
	The values $g_1, \dots, g_n$, $g(s)$ and $b(s)$ are a solution to the equation system~(\ref{eq:compressed_optimiality}) if and only if
	\begin{equation*}
		g'(s) := \begin{dcases*}
			g_i & if $s \in A_i$, \\
			g(s) & otherwise.
		\end{dcases*}
	\end{equation*}
	and $\bias(s)$ are a solution to the gain/bias equations.
\end{corollary}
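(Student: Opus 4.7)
The plan is to verify both directions of the ``if and only if'' by direct substitution, using the fact that attractors are closed under the transition relation (any successor of a state in $\attractor(B)$ still almost surely reaches $B$, hence lies in $\attractor(B)$) together with Lemma~\ref{stm:gain_attractor_equal}. These two ingredients let us freely rewrite $\expsucc(\cdot,s)$ as $\expsucc^{A_i}(\cdot,s)$ whenever $s \in A_i$, and collapse the constant $g_i$ on the attractor into a single term.

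First I would set up the bijection between solutions. Given a solution $(g_1,\dots,g_n, g, b)$ of the compressed system, define $g'$ as in the statement; conversely, given a solution $(g',b)$ of the original gain/bias equations, set $g_i := g'(s)$ for any $s\in A_i$ (well-defined by Lemma~\ref{stm:gain_attractor_equal}) and $g(s):=g'(s)$ on $T$. It then suffices to check that the gain equation and the bias equation translate between the two systems.

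For the gain equation, I would argue separately on $T$ and on each $A_i$. On $s\in T$ the original equation $g'(s)=\expsucc(g',s)$ splits as
\begin{equation*}
\expsucc(g',s) \;=\; \expsucc^{T}(g',s) \;+\; \sum_{i=1}^n \expsucc^{A_i}(g',s) \;=\; \expsucc^T(g,s) \;+\; \sum_{i=1}^n \expsucc^{A_i}(g_i,s),
\end{equation*}
which is precisely the first compressed equation. On $s\in A_i$, closedness of $A_i$ implies $\trans(s,s')=0$ for $s'\notin A_i$, so $\expsucc(g',s)=\sum_{s'\in A_i}\trans(s,s')\,g_i=g_i=g'(s)$; that is, the original gain equation is automatic on attractors and does not need a counterpart in the compressed system.

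For the bias equation, the only non-trivial point is on $s\in A_i$: using again that successors of $s$ lie in $A_i$, I get $\expsucc(b,s)=\expsucc^{A_i}(b,s)$, and substituting $g'(s)=g_i$ reduces the original bias equation to the compressed one. On $s\in T$ both systems write exactly the same equation, and the normalization $b(s_i)=0$ for a chosen $s_i\in B_i$ is identical on both sides. I expect no real obstacle: the whole argument is bookkeeping around the identity $\expsucc(\cdot,s)=\expsucc^{A_i}(\cdot,s)$ valid inside a closed set, together with the constancy of $g'$ on each $A_i$ supplied by Lemma~\ref{stm:gain_attractor_equal}. The mildly subtle point worth flagging explicitly is that the compressed system omits any gain equation for $s\in A_i$, and one must check that this omission is justified, i.e., that such equations are automatically satisfied once $g'$ is constant on $A_i$ and $A_i$ is closed.
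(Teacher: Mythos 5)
Your proof is correct and takes essentially the same approach the paper intends: the paper's own ``proof'' is the one-liner ``Applying the reasoning of Lemma~\ref{stm:gain_attractor_equal} immediately gives us correctness,'' and your argument simply spells out what that reasoning amounts to -- namely, that $\attractor(B_i)$ is closed under the transition relation (so $\expsucc(\cdot,s)=\expsucc^{A_i}(\cdot,s)$ for $s\in A_i$, and in particular the gain equation on $A_i$ is automatic once $g'$ is constant there), plus a direct equation-by-equation translation between the compressed and original systems. The one small point you gloss over is that in the ``$\Leftarrow$'' direction, Lemma~\ref{stm:gain_attractor_equal} speaks about \emph{the} gain of the Markov chain, so to conclude that an arbitrary solution $g'$ of the gain/bias equations is constant on $A_i$ you additionally need the fact, stated in Section~\ref{sec:si_basic}, that these equations determine the gain uniquely; citing that fact would close the gap, but it is exactly the same implicit step the paper takes.
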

This equation system is significantly smaller for Markov chains which contain large BSCC-attractors.
Furthermore, observe that the resulting system matrix also is square.
We have $\cardinality{\bscc(\Mc)} + \cardinality{T}$ gain and $\cardinality{S}$ bias variables but also $\cardinality{T}$ gain and $\cardinality{S} + \cardinality{\bscc(\Mc)}$ bias equations.
Additionally, by virtue of Corollary~\ref{stm:compression_correct} and \cite[Condition~9.2.3]{Puterman}, the system has a unique solution.
Together, this allows the use of more efficient solvers.
Especially when combined with the previous MEC decomposition approach, significant speed-ups can be observed.
\subsubspace %
\subsubsection{SCC decomposition}
The second approach extends the BSCC compression idea by further decomposing the problem into numerous sub-problems.
The formal definition of the improved evaluation algorithm is given in Algorithm~\ref{alg:scc_eval}.
We explain the intuition below and prove correctness in Appendix~\ref{sec:proof_scc_eval}. 

\begin{algorithm}[t]
	\caption{\textsc{SCC-SI}}
	\label{alg:scc_eval}
	\begin{algorithmic}[1]
		\Require MC $\Mc = (\states, \initstate, \trans, \rew)$.
		\Ensure $(\gain, \bias)$, s.t.\ $\gain$ and $\bias$ are solutions to the gain/bias equations.
		\State Obtain $\bscc(\Mc) = \set{B_1, \dots, B_n}$ and $\scc(\Mc) \setminus \bscc(\Mc) = \set{S_1, \dots, S_m}$ with $S_i$ in reverse topological order.
		\For {$B_i \in \bscc(\Mc)$} \Comment{Obtain gain and bias of BSCCs}
			\State Obtain $\gain_i$ and $\bias(s)$ for all $s \in B_i$ by solving the equations \label{alg:scc_eval:bscc_eval}
			\begin{equation*}
				\begin{aligned}
					\bias(s) & = \expsucc^{B_i}(\bias, s) + \rew(s) - \gain_i \quad \forall s \in B_i, \\
					\bias(s_i) & = 0 \quad \text{for one arbitrary but fixed $s_i \in B_i$}.
				\end{aligned}
			\end{equation*}
			\State Set $\gain(s) \gets g_i$ for all $s \in B_i$.
		\EndFor
		\For{$i$ from $1$ to $m$} \Comment{Obtain gain and bias of non-BSCC states}
			\State Let $\states^< := \Union_{j=1}^{i-1} S_j \union \Union_{j=1}^n B_j$
			\State Compute $\mathsf{succ}(\mathsf{\gain}) \gets \set{s' \in \states^< \mid \exists s \in S_i.~\trans(s, s') > 0 \land \gain(s') = \mathsf{\gain}}$.
			\State Set $\mathsf{succ\gain} = \set{\mathsf{\gain} \mid \mathsf{succ}(\mathsf{\gain}) \neq \emptyset}$. \label{alg:scc_eval:gain_values}
			\State For each $\mathsf{\gain} \in \mathsf{succ\gain}$, obtain $p_{\mathsf{\gain}}$ by solving the equations \label{alg:scc_eval:gain_reach}
			\begin{equation*}
				p_{\mathsf{\gain}}(s) = \expsucc^{S_i}(p_{\mathsf{\gain}}, s) + \sum_{s' \in \mathsf{succ}{\mathsf{\gain}}} \trans(s, s') \quad \forall s \in S_i.
			\end{equation*}
			\State Set $\gain(s) \gets \sum_{\mathsf{\gain} \in\mathsf{succ\gain}} p_{\mathsf{\gain}}(s) \cdot \mathsf{\gain}$ for all $s \in S_i$. \label{alg:scc_eval:gain_value}
			\State Obtain $\bias(s)$ for all $s \in S_i$ by solving the equations \label{alg:scc_eval:bias_values}
			\begin{equation*}
				\bias(s) = \expsucc^{S_i}(\bias, s) + \expsucc^{\states^<}(\bias, s) + \rew(s) - \gain(s) \quad \forall s \in S_i.
			\end{equation*}
		\EndFor
		\State \Return $(\gain, \bias)$.
	\end{algorithmic}
\end{algorithm}

As with the compression approach, we exploit the fact that all states in some BSCC have the same gain.
But instead of encoding this information into one big linear equation system, we use it to obtain multiple sub-problems.

First, we obtain gain and bias for each BSCC separately in Line~\ref{alg:scc_eval:bscc_eval}.
Note that there are only $\cardinality{B_i} + 1$ variables and equations, since there only is a single gain variable.
The last equation, setting bias to zero for some state of the BSCC, again induces a unique solution.

Now, these values are back-propagated through the MC.
As mentioned, we can obtain a topological ordering of the SCCs, where a state $s$ in a \enquote{later} SCC cannot reach any state $s'$ in some earlier SCC.
By processing the SCCs in reverse topological order, we can successively compute values of all states as follows.

Since the gain actually is only earned in BSCCs, the gain of some non-BSCC state naturally only depends on the probability of ending up in some BSCC.
More generally, by a simple inductive argument, the gain of such a non-BSCC state only depends on the gains of the states it ends up in after moving to a later SCC.
In other words, the gain only depends on the reachability of the successor gains.
So, instead of constructing a linear equation system involving both gain and bias for each SCC, we determine the different \enquote{gain outcomes} in Line~\ref{alg:scc_eval:gain_values} and then compute the probability of these outcomes in Line~\ref{alg:scc_eval:gain_reach}, i.e.\ the probability of reaching a state obtaining some particular successor gain.
Finally, we simply set the gain of some state as the expected outcome in Line~\ref{alg:scc_eval:gain_value}.
Only then the bias is computed in Line~\ref{alg:scc_eval:bias_values} by solving the bias equation with the computed gain values inserted as constants.

At first glance, this might seem rather expensive, as there are $\cardinality{\mathsf{succ\gain}} + 1$ linear equation systems instead of one.
But the corresponding matrices of the systems in Lines~\ref{alg:scc_eval:gain_reach} and \ref{alg:scc_eval:bias_values} actually are (i)~square with a unique solution, allowing the use of LU decomposition; and (ii)~are the same for a particular SCC, enabling reuse of the obtained decomposition.
(For proof, see Appendix~\ref{sec:proof_scc_eval}.)

Note how this in fact generalizes the idea of computing attractors in the BSCC-compression approach.
Suppose a non-BSCC state $s \in S_j$ is in the attractor of a particular BSCC $B_i$.
Since moving to $B_i$ is the only possible outcome, $\mathsf{succ\gain}$ as computed in Line~\ref{alg:scc_eval:gain_values} actually is a singleton set containing only the gain $\gain_i$ of the BSCC.
Then $p_{\gain_i}(s) = 1$ for all states in $S_j$ and we can immediately set $\gain(s) = \gain_i$.

\section{Approximation-guided solutions} \label{sec:heuristic_si}
\secspace %

This section introduces another idea to increase efficiency of the strategy iteration.
Section~\ref{sec:mec_heuristics} then combines this method with optimizations of the previous section in a non-trivial way, yielding a super-additive optimization effect.
Our new approach relies on the following observation.
In order to improve a strategy, it is not always necessary to know the exact gain in each state; sufficiently tight bounds are enough to decide that the current action is sub-optimal.
To this end, we assume that we are given an approximative oracle for the gain of any state under some strategy\footnote{We will go into detail why we do not deal with bias later on.}.
Formally, we require a function $\gain^\approx : \straas^{\mathsf{MD}} \times \states \to \Reals^{\geq 0} \times \Reals^{\geq 0}$ and call it \emph{consistent} if for $\gain^\approx(\straa, s) = (\gain_L(\straa, s), \gain_U(\straa, s))$ we have that $\gain^\straa(s) \in [\gain_L(\straa, s), \gain_U(\straa, s)]$.
For readability, we write $\gain_L(\straa)$ and $\gain_U(\straa)$ for the functions $s \mapsto \gain_L(\straa, s)$ and $s \mapsto \gain_U(\straa, s)$, respectively.

\begin{algorithm}[t]
	\caption{\textsc{Approx-SI}}
	\label{alg:heuristic_si}
	\begin{algorithmic}[1]
		\Require MDP $\Mdp = (\states, \initstate, \actions, \av, \trans, \rew)$ and consistent gain approximation $\gain^\approx$.
		\Ensure $(g^*, \pi^*)$, s.t.\ $g^*$ is the optimal gain of the MDP and is obtained by $\pi^*$.
		\State Set $n \gets 0$, and pick an arbitrary strategy $\straa_0 \in \straas^{\mathsf{MD}}$.
		\State Set $\straa_{n+1} = \straa_n$ \label{alg:heuristic_si:line:gain_approx_eval}
		\For {$s \in S$} \label{alg:heuristic_si:line:approx_gain_start}
			\Comment{Approximate gain improvement}
			\If {$\gain_U(\straa_n, s) < \max_{a \in \av(s)} \expsucc(\gain_L(\straa_n), s)$} \label{alg:heuristic_si:line:approx_gain_nonopt}
				\State Pick $\straa_{n+1} \in \argmax_{a \in \av(s)} \expsucc(\gain_L(\straa_n), s, a)$.\label{alg:heuristic_si:line:approx_gain_nonopt_update}
			\EndIf
		\EndFor \label{alg:heuristic_si:line:approx_gain_end}
		\If {$\straa_{n+1} \neq \straa_n$} increment $n$ by 1, go to Line~\ref{alg:heuristic_si:line:gain_approx_eval}.
		\EndIf
		\State Obtain $\gain_{n+2}$ and $\straa_{n+2}$ by one step of precise SI. \label{alg:heuristic_si:line:precise}
		\Comment{Precise improvement}
		\If {$\straa_{n+2} \neq \straa_{n+1}$} increment $n$ by 2, go to Line~\ref{alg:heuristic_si:line:gain_approx_eval}.
		\EndIf
		\State \Return $(\gain_{n+2}, \straa_{n+2})$ \label{alg:heuristic_si:line:return_after_precise}
	\end{algorithmic}
\end{algorithm}

In Algorithm~\ref{alg:heuristic_si}, we define a variant of strategy iteration, which incorporates this approximation for gain improvement.
Let us focus on this improvement in Line~\ref{alg:heuristic_si:line:approx_gain_nonopt_update}.
There are three cases to distinguish.
(1) If the test on Line~\ref{alg:heuristic_si:line:approx_gain_nonopt} holds, i.e.\ the upper bound on the gain in the current state is smaller than the lower bound under some other action $a$, then $a$ definitely gives us a better gain.
Therefore, we switch the strategy to this action.
If the test does not hold, there are two other cases to distinguish:
(2) If in contrast, the lower bound on the gain in the current state is bigger than the upper bound under any other action, the current gain definitely is better than the gain achievable under any other action.
Hence the current action is optimal and the strategy should not be changed.
(3) Otherwise, the approximation does not offer us enough information to conclude anything.
The current action is neither a clear winner nor a clear loser compared to the other actions.
In this case we also refrain from changing the strategy.
Intuitively, if there are any changes to be done in Case (3), we postpone them until no further improvements can be done based solely on the approximations.
They will be dealt with in Line~\ref{alg:heuristic_si:line:precise}, where we determine the gain precisely.
\begin{theorem} \label{stm:heuristic_si_correct}
	Algorithm~\ref{alg:heuristic_si} terminates for any MDP and consistent gain approximation function.
	Furthermore, the gain and corresponding strategy returned by the algorithm is optimal.
\end{theorem}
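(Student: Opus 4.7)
The plan is to establish termination by showing that every non-terminating iteration of Algorithm~\ref{alg:heuristic_si} produces a strictly improved strategy (so no memoryless deterministic strategy is ever revisited), and then to deduce correctness from Theorem~\ref{stm:si_correct} at the moment the precise SI step stabilises. Throughout, I will lean on consistency of $\gain^\approx$ to convert approximation-based updates into the hypothesis of the classical mean-payoff policy improvement step.

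For the approximation phase on Lines~\ref{alg:heuristic_si:line:approx_gain_start}--\ref{alg:heuristic_si:line:approx_gain_end}, fix a state $s$ where the test on Line~\ref{alg:heuristic_si:line:approx_gain_nonopt} fires, with maximiser $a \in \av(s)$ picked on Line~\ref{alg:heuristic_si:line:approx_gain_nonopt_update}. Consistency of $\gain^\approx$ yields
\begin{equation*}
	\gain^{\straa_n}(s) \leq \gain_U(\straa_n, s) < \expsucc(\gain_L(\straa_n), s, a) \leq \expsucc(\gain^{\straa_n}, s, a),
\end{equation*}
while at every state $s'$ where the test fails we keep $\straa_{n+1}(s') = \straa_n(s')$, giving $\expsucc(\gain^{\straa_n}, s', \straa_{n+1}(s')) = \gain^{\straa_n}(s')$ trivially. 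This is exactly the hypothesis of the gain-improvement step underlying Algorithm~\ref{alg:si}, from which the standard comparison reasoning of \cite{Puterman} yields $\gain^{\straa_{n+1}} \geq \gain^{\straa_n}$ componentwise, with strict inequality at least at $s$. Hence a change produced by the approximation loop is genuine: $\straa_{n+1}$ strictly dominates $\straa_n$ in gain.

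For the precise step on Line~\ref{alg:heuristic_si:line:precise}, Theorem~\ref{stm:si_correct} gives that one iteration of Algorithm~\ref{alg:si} either strictly improves the $(\gain, \bias)$ pair in the lexicographic order used there, or leaves the strategy unchanged, in which case the strategy is already gain-optimal. Combining the two phases, the produced sequence $\straa_0, \straa_1, \straa_2, \dots$ is strictly improving in $(\gain, \bias)$ and therefore never revisits a strategy; finiteness of $\straas^{\mathsf{MD}}$ forces termination. Upon returning, the preceding precise step did not alter the strategy, which is precisely the termination condition of Algorithm~\ref{alg:si}, so Theorem~\ref{stm:si_correct} delivers $\gain_{n+2} = \gain^*$ and optimality of $\straa_{n+2}$.

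The main obstacle is handling the fact that a single approximation pass may simultaneously update several states. The per-state argument above only establishes improvement at each individually changed state; to conclude that the \emph{combined} update is still a valid, monotone improvement (and cannot undo earlier gains), I will invoke the standard policy improvement theorem with the uniform inequality $\expsucc(\gain^{\straa_n}, s, \straa_{n+1}(s)) \geq \gain^{\straa_n}(s)$ for all $s \in \states$, which the two cases above together establish. This multi-state invocation is the single non-routine ingredient; everything else is bookkeeping over finitely many memoryless deterministic strategies.
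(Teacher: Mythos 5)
Your proof is correct and follows essentially the same route as the paper's: both arguments reduce termination to the classical mean-payoff policy improvement theorem (Puterman, Theorem~9.2.6), after observing that the approximation-based update is conservative. The paper states this more compactly — "it only modifies the strategy if the improvement definitely increases the gain, hence the pair $\straa_n, \straa_{n+1}$ satisfies the conditions of this theorem" — whereas you unfold the consistency chain
\begin{equation*}
\gain^{\straa_n}(s) \leq \gain_U(\straa_n, s) < \expsucc(\gain_L(\straa_n), s, a) \leq \expsucc(\gain^{\straa_n}, s, a)
\end{equation*}
and separately note that unchanged states give $\expsucc(\gain^{\straa_n}, s', \straa_{n+1}(s')) = \gain^{\straa_n}(s')$ by the gain equation of the induced chain. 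That extra explicitness is useful, and the worry you flag about several states being updated in one sweep is legitimate but is indeed resolved exactly as you say: Theorem~9.2.6 is a statement about the whole improved policy, not a per-state statement, so the uniform inequality $\expsucc(\gain^{\straa_n}, s, \straa_{n+1}(s)) \geq \gain^{\straa_n}(s)$ (strict somewhere) is exactly its hypothesis. One small imprecision: the theorem guarantees a strict gain increase in \emph{some} state, not necessarily at the state $s$ you changed; this does not affect your non-repetition argument, but the phrase "with strict inequality at least at $s$" overclaims slightly. The correctness half you argue identically to the paper: the algorithm only returns after a precise SI step that leaves the strategy unchanged, which is Algorithm~\ref{alg:si}'s own stopping condition, so Theorem~\ref{stm:si_correct} applies directly.
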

The proof can be found in Appendix~\ref{sec:proof_heuristic_si_correct}. 

\subsubsection{Implementing gain approximations}
In order to make Algorithm \ref{alg:heuristic_si} practical, we provide a prototype for such a gain approximation.
To this end, we can again interpret the MC $\Mc$ as a degenerate MDP $\Mdp$ and apply variants of the value iteration methods of \cite[Algorithm 2]{cav}.
We want to emphasize that there are no restrictions on the oracle except consistency, hence there may be other, faster methods applicable here.
This also opens the door for more fine-tuning and optimizations.
For instance, instead of \enquote{giving up} on the estimation and solving the equations precisely, the gain approximation could be asked to refine the estimate for all states where there is uncertainty and Case~(3) occurs.

\subsubsection{Difficulties in using bias estimations}
One may wonder why we did not include a bias estimation function in the previous algorithm.
There are two main reasons for this, namely (i)~by naively using the bias approximation, the algorithm may not converge any more (even with $\varepsilon$-precise approximations) and (ii)~it seems rather difficult to efficiently obtain a reasonable bias estimate.
We provide more detail and intuition in Appendix~\ref{sec:bias_approximation_difficult}. 

\subsection{Synergy of the approaches} \label{sec:mec_heuristics}

\begin{algorithm}[t]
	\makeatletter \renewcommand{\ALG@name}{Procedure} \makeatother
	\caption{\textsc{MEC-Approx}}
	\label{alg:heuristic_local_si_gain}
	\begin{algorithmic}[1]
		\State Set $\gain^{\max}_L(\straa_n) \gets \max_{s \in M} \gain_L(\straa_n, s)$, $S_- \gets \set{s \mid \gain_U(\straa_n, s) < \gain^{\max}_L(\straa_n)}$, $S_+ = M \setminus S_-$. \label{alg:heuristic_local_si_gain:line:gain_eval}
		\If {$S_- = \emptyset$} Continue with precise improvement.
		\Else
			\While {$S_- \neq \emptyset$}
				\State Obtain $s \in S_-$ and $a \in \av(s)$ such that $\sum_{s' \in S_+} \trans(s, a, s') > 0$.
				\State Set $\straa_{n+1}(s) \gets a$, $S_+ \gets S_+ \union \set{s}$, $S_- \gets S_- \setminus \set{s}$.
			\EndWhile
			\State Increment $n$ by 1, go to Line~\ref{alg:heuristic_local_si_gain:line:gain_eval}.
		\EndIf
	\end{algorithmic}
\end{algorithm}

In order to further improve the approximation-guided approach, we combine it with the idea of MEC decomposition, which in turn allows for even more optimizations.
As already mentioned, each state in a MEC has the same optimal gain.
In combination with the idea of the algorithm in \cite[Sec.~9.5.1]{Puterman}, this allows us to further enhance the gain improvement step as follows.

The gain $\gain^*(M)$ of some MEC $M$ certainly is higher than the lower bound achieved through some strategy in any state of the MEC, which is $\gain^{\max}_L(\straa_n) := \max_{s \in M} \gain_L(\straa_n, s)$.
Hence, any state of the MEC which has an upper bound less than $\gain^{\max}_L(\straa_n)$ is suboptimal, as we can adapt the strategy such that it achieves at least this value in every state of the MEC.
With this, the gain improvement step can be changed to (i)~determine the maximal lower bound $\gain^{\max}_L(\straa_n)$, (ii)~identify all states $S_+$ which have an upper bound greater than this lower bound and (iii)~update the strategy in all other states $S_-$ to move to this \enquote{optimal} region.
Algorithm~\ref{alg:heuristic_local_si_gain} then is obtained by replacing the approximate gain improvement in Lines~\ref{alg:heuristic_si:line:approx_gain_start} to \ref{alg:heuristic_si:line:gain_approx_eval} of Algorithm~\ref{alg:heuristic_si} by Procedure~\ref{alg:heuristic_local_si_gain}.
\begin{theorem} \label{stm:heuristic_local_si_correct}
	Algorithm~\ref{alg:heuristic_local_si_gain} terminates for any MDP and consistent gain approximation function.
	Furthermore, the gain and corresponding strategy returned by the algorithm indeed is optimal.
\end{theorem}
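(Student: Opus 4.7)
The plan is to reduce the statement to Theorem~\ref{stm:heuristic_si_correct}, since Algorithm~\ref{alg:heuristic_local_si_gain} differs from Algorithm~\ref{alg:heuristic_si} only in replacing the approximate gain improvement (Lines~\ref{alg:heuristic_si:line:approx_gain_start}--\ref{alg:heuristic_si:line:gain_approx_eval}) with Procedure~\ref{alg:heuristic_local_si_gain}. Accordingly, I must verify that the new procedure (i)~terminates on each call, (ii)~whenever it modifies the strategy, the new strategy's gain weakly dominates the previous one on $M$ and strictly improves it at some state, and (iii)~hands over to the precise improvement step with the guarantees of Theorem~\ref{stm:si_correct} intact once no approximation-based progress is possible.

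For (i), consistency of $\gain^\approx$ forces the state attaining $g_L^{\max}(\straa_n)$ to lie in $S_+$ (since $g_U \geq g_L$ pointwise), so $S_+$ is nonempty from the start. Whenever $S_-$ is nonempty, some $s \in S_-$ with an available action $a$ satisfying $\sum_{s' \in S_+} \trans(s, a, s') > 0$ must exist: otherwise $S_-$ would be closed under every available action, so $(S_-, \Union_{s \in S_-} \av(s))$ would be an end component strictly inside $M$, contradicting maximality of $M$. Each iteration of the inner loop strictly decreases $\abs{S_-}$, so the procedure halts after at most $\abs{M}$ steps.

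The main obstacle is (ii). Here the key lemma is that under $\straa_{n+1}$ every state of $M$ reaches the original $S_+$ with positive probability: since Procedure~\ref{alg:heuristic_local_si_gain} adds states to $S_+$ one at a time, assigning to each chosen $s$ an action with positive probability into the \emph{then-current} $S_+$, a simple induction on the processing order shows that every formerly $S_-$ state has positive probability under $\straa_{n+1}$ to reach the original $S_+$. Consequently every BSCC of the induced Markov chain on $M$ meets the original $S_+$, and on $S_+$ the strategy is unchanged. Combining this with the MEC property — the optimal gain in $M$ is a single constant $g^*(M) \geq g_L^{\max}(\straa_n)$, since some $s^* \in M$ witnesses $g_L^{\max}(\straa_n) \leq \gain^{\straa_n}(s^*) \leq g^*(M)$ — I would conclude $\gain^{\straa_{n+1}}(s) \geq g_L^{\max}(\straa_n) \geq \gain^{\straa_n}(s)$ for every $s$ originally in $S_-$, with strict inequality at the state witnessing the gap in the partition. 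Strict monotonicity of the gain (together with the standard lexicographic (gain, bias) tiebreaker that the precise step uses, cf.\ Theorem~\ref{stm:si_correct}) prevents revisiting any strategy.

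For (iii), once $S_- = \emptyset$ throughout $M$, the procedure defers to the precise improvement step of Algorithm~\ref{alg:heuristic_si}. By Theorem~\ref{stm:si_correct} that step either strictly improves the strategy — triggering another round of the outer loop — or certifies optimality, in which case the returned pair is optimal. Finiteness of $\straas^{\mathsf{MD}}$ together with the non-repetition established in (ii) bounds the number of rounds, yielding termination and correctness of the returned $(\gain^*, \straa^*)$.
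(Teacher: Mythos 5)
Your high-level plan — reduce to Theorem~\ref{stm:heuristic_si_correct} by showing that each application of Procedure~\ref{alg:heuristic_local_si_gain} is a monotone gain improvement that hands off to the precise step — is the same as the paper's, and the observations that $S_+$ is nonempty and that some $(s,a)$ with $\sum_{s' \in S_+}\trans(s,a,s') > 0$ must exist (by strong connectivity of the MEC, not strictly by \emph{maximality} as you write) are correct and in the right spirit.

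However, the central step of part~(ii) has a genuine gap. You establish that under $\straa_{n+1}$ every state reaches the \emph{original} $S_+$ with positive probability, and hence that every BSCC of $\Mc'$ \emph{meets} $S_+$. But intersecting $S_+$ is strictly weaker than what is needed: a BSCC $B'$ of $\Mc'$ could a priori contain both $S_+$-states and $S_-$-states, and on the $S_-$-part the strategy has been modified, so the gain of $B'$ under $\straa_{n+1}$ need not coincide with the old gain of any cycle in $S_+$, nor is it bounded below by $\gain_L^{\max}(\straa_n)$ merely because $B' \cap S_+ \neq \emptyset$. The subsequent appeal to $g^*(M) \geq \gain_L^{\max}(\straa_n)$ does not close this, because $g^*(M)$ bounds the \emph{optimal} gain, not $\gain^{\straa_{n+1}}$. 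In short, nothing in your argument rules out that $\gain^{\straa_{n+1}}(s) < \gain^{\straa_n}(s)$ for some $s$, which would break the monotone-improvement reasoning you invoke.

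The paper closes exactly this gap with a structural argument based on Lemma~\ref{stm:gain_attractor_equal}: since all states in $\attractor(B)$ have the same gain and $\gain^\approx$ is consistent, each attractor is entirely contained in $S_+$ or entirely in $S_-$; hence the family $\mathcal{B}$ of old BSCCs contained in $S_+$ is nonempty, and — because $\straa_{n+1}$ agrees with $\straa_n$ on $S_+$ while every $S_-$-state is redirected toward $S_+$ — the BSCCs of $\Mc'$ are exactly $\mathcal{B}$, with their gains unchanged. That \enquote{BSCCs of $\Mc'$ are subsets of $S_+$, and indeed are exactly the old BSCCs inside $S_+$} is the missing ingredient: it is what lets one propagate the gain bound from the (unchanged) recurrent classes to all of $M$. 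You should replace the \enquote{every BSCC meets $S_+$} claim by this stronger attractor-partitioning argument.
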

The proof can be found in Appendix~\ref{sec:proof_heuristic_local_si_correct}. 
\section{Experimental Evaluation} \label{sec:exper}

In this section, we compare the presented approaches to established tools.

\subsubsection{Implementation details} We implemented our constructions\footnote{Accessible at \url{https://www7.in.tum.de/~meggendo/artifacts/2017/atva_si.txt}} in the PRISM Model Checker \cite{KNP11}.
We also added several general purpose optimizations to PRISM, improving the used data structures.
This may influence the comparability of these results to other works implemented in PRISM.

In order to solve the arising systems of linear equations, we used the \texttt{ojAlgo} Java library\footnote{\url{http://ojalgo.org/}}.
Whenever possible, we employed LU decomposition to solve the equation systems and SVD otherwise.
We use \texttt{double} precision for all computations, which implies that results are only precise modulo numerical issues.
The implementation can easily be extended to arbitrary precision, at the cost of performance.
Further, our implementation only uses the parallelization of \texttt{ojAlgo}.
Since the vast majority of computation time is consumed by solving equation systems, we did not implement further parallelization.

\subsubsection{Experimental setup} All benchmarks have been run on a \texttt{4.4.3-gentoo} x64 virtual machine with 16 cores of 3.0 GHz each, a time limit of 10 minutes and memory limit of 32 GB, using the 64-bit Oracle JDK version \texttt{1.8.0\_102-b14}.
All time measurements are given in seconds and are averaged over 10 executions.
Instead of measuring the time which is spent in a particular algorithm, we decide to measure the overall \emph{user CPU time} of the PRISM process using the UNIX tool \texttt{time}.
This metric has several advantages.
It allows for an easy and fair comparison between, e.g., multithreaded executions, symbolic methods or implementations which do not construct the whole model.
Further, it reduces variance in measurements caused by the operating system through, e.g., the scheduler.
Note that this also allows for measurements larger than the specified timeout, as the process may spend this timeout on each of the 16 cores.
Also, we want to mention that our comparisons would profit from measuring real time, since the majority of SI computations is carried out in parallel, whereas the tools we compare to are hardly parallelized.

\subsection{Models}

We briefly explain the examples used for evaluation.
\textbf{virus} \cite{kwiatkowska2009probabilistic} models a \emph{virus spreading through a network}.
We reward each attack carried out by an infected machine.
\textbf{cs\_nfail} \cite{komuravelli2012assume} models a \emph{client-server mutual exclusion protocol} with probabilistic failures of the clients.
A reward is given for each successfully handled connection.
\textbf{phil\_nofair} \cite{DFP04} represents the (randomised) \emph{dining philosophers} without fairness assumptions.
We use two reward structures, rewarding \enquote{thinking} and \enquote{eating}, respectively.
\textbf{sensor} \cite{komuravelli2012assume} models a \emph{network of sensors} sending values to a central processor over a lossy connection.
Processing received data is rewarded.
\textbf{mer} \cite{Feng2011} captures the behaviour of a \emph{resource arbiter} on a Mars exploration rover.
We reward each time some user is granted access to a resource by the arbiter.

\subsection{Tools}

Since we are unaware of other implementations, we implemented standard SI as in Algorithm~\ref{alg:si} by ourselves.
We compare the following variants of SI.
\begin{itemize}
	\item \texttt{SI}: Standard SI as presented in Algorithm~\ref{alg:si}.
	\item \texttt{BSCC}: SI with BSCC compression gain/bias equations.
	\item \texttt{SCC}: The SCC decomposition approach of Algorithm~\ref{alg:scc_eval}.
	\item $\texttt{SCC}_A$: The SCC decomposition approach combined with the approximation methods from Section~\ref{sec:heuristic_si}.
\end{itemize}
Further, a \enquote{$M$} superscript denotes use of the MEC decomposition approach as in Algorithm~\ref{alg:mec_dec_si}.
In the case of $\texttt{SCC}_A^M$, we use the improved method of Section~\ref{sec:mec_heuristics}.
More details and evaluation of some further variants can be found in Appendix~\ref{sec:further_experiments}. 
During our experiments, we observed that the algorithm used to solve the resulting reachability problem did not influence the results significantly, since the weighted quotients are considerably simpler than the original models.

We compare our methods to the value iteration approach we presented in \cite[Algorithm 2]{cav} with a required precision of $10^{-8}$ (\texttt{VI}).
This comparison has to be evaluated with care, since (i)~value iteration inherently is only $\varepsilon$-precise and (ii)~it needs a MEC decomposition for soundness.
Note that topological optimizations for value iteration as suggested by, e.g., \cite{BKLPW17} are partially incorporated by \texttt{VI}, since each MEC is iterated separately.

We also provide a comparison to the LP-based MultiGain~\cite{DBLP:conf/tacas/BrazdilCFK15} in Appendix~\ref{sec:further_experiments}. 
In summary, the LP approach is soundly beaten by our optimized approaches.
A more detailed comparison can be found in~\cite{cav}.

We are unaware of other implementations capable of solving the mean payoff objective for MDPs.
Neither did we find a mean payoff solver for stochastic games which we could easily set up to process the PRISM models.

\subsection{Results}

\begin{table}[t]
	\centering
	\caption{Comparison of various variants on the presented models.
		Timeouts and memouts are denoted by a hyphen.
		The best results in each row are marked in bold, excluding \texttt{VI}.
		The number of states and MECs are written next to the model.}
	\setlength\tabcolsep{0pt}
	\begin{tabu} to 0.99\linewidth {l@{\hspace{2pt}}|CCCCCCC|C}
		\textbf{Model} & \texttt{SI} & $\texttt{SI}^M$ & \texttt{BSCC} & $\texttt{BSCC}^M$ & \texttt{SCC} & $\texttt{SCC}_A$ & $\texttt{SCC}^M_A$ & \texttt{VI} \\
		\toprule
		cs\_nfail3 (184, 38)    &          17 & \textbf{4} & \textbf{4} & \textbf{4} &   \textbf{4} &  \textbf{4} &  \textbf{4} &   4 \\
		cs\_nfail4 (960, 176)   &        1129 &          6 &         16 & \textbf{5} &   \textbf{5} &  \textbf{5} &           6 &   5 \\
		\midrule
		virus (809, 1)          &         $-$ & \textbf{4} &         10 & \textbf{4} &            5 &           5 &  \textbf{4} &   4 \\
		\midrule
		phil\_nofair3 (856, 1)  &         $-$ &        $-$ &        112 &        112 &   \textbf{6} &          10 &           7 &   5 \\
		phil\_nofair4 (9440, 1) &         $-$ &        $-$ &        $-$ &        $-$ &  \textbf{15} &         310 &         107 &  18 \\
		\midrule
		sensors1 (462, 132)     &         $-$ &         13 &         23 & \textbf{4} &   \textbf{4} &           6 &  \textbf{4} &   5 \\
		sensors2 (7860, 4001)   &         $-$ &         89 &        $-$ &         14 &           13 &         168 & \textbf{11} &  15 \\
		sensors3 (77766, 46621) &         $-$ &        $-$ &        $-$ &         78 &  \textbf{40} &         $-$ &          46 &  72 \\
		\midrule
		mer3 (15622, 9451)      &         $-$ &         21 &        $-$ &         26 &  \textbf{16} &         244 &          22 &  15 \\
		mer4 (119305, 71952)    &         $-$ &         58 &        $-$ &        163 &  \textbf{42} &         $-$ &          84 &  64 \\
		mer5 (841300, 498175)   &         $-$ &        $-$ &        $-$ &        $-$ & \textbf{474} &         $-$ &         $-$ & $-$ \\
		\bottomrule
	\end{tabu}
	\label{tbl:experiments}
\end{table}

We will highlight various conclusions to be drawn from Table~\ref{tbl:experiments}.
Comparing the naive SI with our enhanced versions \texttt{BSCC} and \texttt{SCC}, the number of strategy improvements does not differ, but the evaluation of each strategy is significantly faster, yielding the differences displayed in the table.

On the smaller models (\textbf{cs\_nfail} and \textbf{virus}) nearly all of the optimized methods perform comparable, a majority of the runtime actually is consumed by the start-up of PRISM.
Especially on \textbf{virus}, all the MEC-decomposition approaches have practically the same execution time due to the model only having a single MEC with a single state, which makes solving the model trivial for these approaches.

The results immediately show how intractable naive strategy iteration is.
On models with only a few hundred states, the computation already times out.
The BSCC compression approach \texttt{BSCC} suffers from the same issues, but already performs significantly better than \texttt{SI}.
In particular, when combined with MEC decomposition, it is able to solve more models within the given time.

Further, we see immense benefits of using the SCC approach, regularly beating even the quite performant (and imprecise) value iteration approach.
Interestingly, the variants using approximation often perform worse than the \enquote{pure} SCC method.
We conjecture that this is due the gain approximation function we used.
It computes the gain up to some adaptively chosen precision instead of computing up to a certain number of iterations.
Changing this precision bound gave mixed results, on some models performance increased, on some it decreased.
Comparing the two approximation-based approaches $\texttt{SCC}_A$ and $\texttt{SCC}_A^M$, we highlight the improvements of Algorithm~\ref{alg:heuristic_local_si_gain}, speeding up convergence even though a MEC decomposition is computed.

Finally, we want to emphasize the \texttt{mer} results.
Here, our \texttt{SCC} approach manages to obtain a solution within the time- and memory-bound, while all MEC decomposition approaches fail due to a time-out. %

\section{Conclusion}

We have proposed and evaluated several techniques to speed up strategy iteration.
The combined speed ups are in orders of magnitude.
This makes strategy iteration competitive even with the most used and generally imprecise value iteration and shows the potential of strategy iteration in the context of MDPs.

In future work, we will further develop this potential.
Firstly, building upon the \emph{SCC decomposition}, we can see opportunities to interleave the SCC computation and the improvements of the current strategy.
Secondly, the \emph{gain approximation} technique used is quite naive.
Here we could further adapt our recent results on VI \cite{cav}, in order to improve the performance of the approximation.
Besides, we suggest to use simulations to evaluate the strategies.
Nevertheless, the incomplete confidence arising form stochastic simulation has to be taken into account here.
Thirdly, techniques for efficient \emph{bias} approximation and algorithms to utilize it would be desirable.
Finally, a fully configurable tool would be helpful to find the sweet-spot combinations of these techniques and useful as the first scalable tool for mean payoff optimization in MDPs.

\paragraph{Acknowledgments} We thank the anonymous reviewers for their insightful comments and valuable suggestions.
In particular, a considerable improvement to the BSCC compression approach of Section~\ref{sec:sccs} has been proposed.

\bibliographystyle{alpha}
\bibliography{ref}

\vfill
\pagebreak
\appendix
\section*{Appendix}
\section{Linear Algebra} \label{sec:linear_algebra}

We consider quite a few linear equation systems, i.e.\ equations of the form $A \cdot x = b$, where $A \in \Reals^{n \times m}$ is some matrix, $x \in \Reals^m$ is a solution vector to be determined and $b \in \Reals^n$ are the constant terms of the system.
To this end, we quickly recall some basic terminology related to this problem.
For an in depth discussion of these topics, we refer to the numerous existing books, e.g.\  \cite{cheney2012numerical}.

\paragraph{Condition number}
The condition number of a matrix $A$ intuitively describes how much the norm of $A \cdot x$ changes depending on $x$.
It is directly related to the rate of convergence and numerical stability of many solution methods for linear equation systems.
As a rule of thumb, a condition number of $\kappa = 10^k$ roughly translates to losing up to $k$ digits of accuracy \cite[p. 321]{cheney2012numerical}.

\paragraph{Solving linear equations}
There are many different methods to solve linear equation systems precisely.
Most of the precise method are so called decomposition approaches, where the majority of computational effort goes into decomposing $A$ into some other matrices.
Once such a decomposition is obtained, solving the equation system for multiple $b$ is very quick.
Our implementation uses \emph{singular value decomposition (SVD)}, which exists for any matrix, and \emph{LU decomposition}, which only exists for full-rank square matrices, but is considerably faster.

\section{Advantage of strategy iteration over value iteration} \label{sec:strategy_faster_than_value_iteration}

\begin{figure}[h]
	\centering
	\begin{tikzpicture}[auto,initial text={},node distance=1cm]
		\node[state,initial] (state_1) {$s_1$};
		\node[right=of state_1, draw,shape=circle,fill=black,scale=0.2] (state_1_to_2) {};
		\node[state,right=of state_1_to_2] (state_2) {$s_2$};
		\node[right=of state_2, draw,shape=circle,fill=black,scale=0.2] (state_2_to_dots) {};
		\node[right=of state_2_to_dots] (dots) {$\cdots$};
		\node[state,right=of dots] (state_n) {$s_n$};

		\path[->]
		(state_1) edge[-]                        node {$a, 0$} (state_1_to_2)
		(state_1_to_2) edge[out=-90,in=-30,looseness=0.95] node {} (state_1)
		(state_1_to_2) edge                      node {$0.01$} (state_2)
		(state_2) edge[-]                        node {$a, 0$} (state_2_to_dots)
		(state_2_to_dots) edge[out=-120,in=-30,looseness=0.5,pos=0.75] node {$0.99$} (state_1)
		(state_2_to_dots) edge                   node {$0.01$} (dots)
		(dots) edge                              node {$0.01$} (state_n)
		(state_n) edge[loop right]               node {$b, 1$} (state_n)
		(state_n) edge[in=70,out=110,looseness=0.3,swap] node {$a, 0$} (state_1);

	\end{tikzpicture}
	\caption{A small example highlighting why strategy iteration performs better for some models.
		On each edge we write the action corresponding to this transition and the reward for taking this action, followed by the probabilistic branching, if any.}
	\label{fig:strategy_faster_than_value_iteration}
\end{figure}
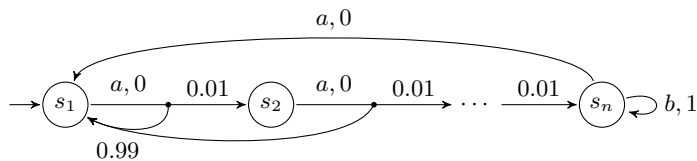
\begin{example} \label{example:strategy_faster_than_value_iteration}
	Consider the MDP given in Figure~\ref{fig:strategy_faster_than_value_iteration}.
	The optimal gain is $1$ in all states.
	When solving this model using value iteration, the algorithm takes exponentially many steps to yield this result, since the back-propagation of the value $1$ from $s_n$ is slowed down by a factor of $0.01$ in each state.
	Especially, after $n$ iterations, the value of $s_1$ would only be $0.01^n$.
	
	In contrast, strategy iteration only needs at most one improvement steps to achieve this result.
	Starting with the strategy which chooses action $a$ in each state, strategy iteration identifies a suboptimal choice in state $s_n$ by the evaluation of the bias.
	Hence, the improvement step will switch to action $a$, yielding the optimal strategy.
	By again determining gain and bias of the second strategy, the algorithm verifies this optimality and terminates.

	We confirmed this intuitional reasoning by experimental evaluation.
	For $n = 500$, the SCC decomposition approach of Algorithm~\ref{alg:scc_eval} terminates within a few seconds and even for $n = 5000$ the computation completes within two minutes.
	In comparison, value iteration already takes a noticeable amount of time for $n = 5$ and even fails to yield a result for $n = 10$ after an hour.
\end{example}

\section{Necessity of bias improvement} \label{sec:bias_improvement_necessary}

\begin{figure}[h]
	\centering
	\begin{tikzpicture}[auto,initial text={}]
		\node[state,initial] (state_0) {$s_1$};
		\node[state,right=2cm of state_0] (state_1) {$s_2$};
		
		\path[->]
		(state_0) edge[loop above]       node {$a, 1$} (state_0)
		(state_0) edge[out=25, in=155]   node {$b, 3$} (state_1)
		(state_1) edge[loop above]       node {$a, 1$} (state_1)
		(state_1) edge[out=205, in=-25]  node {$b, 3$} (state_0);
	\end{tikzpicture}
	\caption{A small example highlighting why bias improvement is necessary for strategy iteration.
		On each edge we write the action corresponding to this transition and the reward for taking this action.}
	\label{fig:bias_improvement_necessary}
\end{figure}
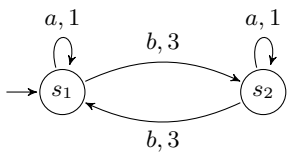
\begin{example} \label{example:bias_improvement_necessary}
	Consider the MDP given in Figure~\ref{fig:bias_improvement_necessary}
	Clearly, the optimal gain is $3$ in both states, obtained by the strategy $\straa^* = (b, b)$, i.e.\ taking action $b$ in both states.
	Suppose the algorithm instead starts with the strategy $\straa = (a, a)$.
	Then, $\gain(s_1) = \gain(s_2) = 1$ and the gain improvement step in Algorithm~\ref{alg:si}, Line~\ref{alg:si:line:gain_improve} does not modify the strategy.
	For the bias we have that $\bias(s_1) = \bias(s_2) = 0$ and thus for both states $\rew(s, a) + \expsucc(\bias, s, a) = 1 < 3 = \rew(s, b) + \expsucc(\bias, s, b)$.
	Hence, the bias improvement step changes the strategy to the optimal $\straa^*$.
\end{example}

\section{Weighted MEC quotient} \label{sec:weighted_quotient}

In this section, we recall the \emph{weighted MEC quotient} from~\cite{cav}.
This construction is a variant of the MEC quotient defined in~\cite{DeAlfaro1997} and is used to reduce mean payoff to a reachability query, given that we obtained the gain of each MEC $\gain^*(M)$.
We provide a formal definition and an intuition for the reduction.
\begin{definition}
	Let $\Mdp = (\states,\initstate,\actions,\av,\trans,\rew)$ be an MDP with $\mec(\Mdp) = \set{M_1, \dots, M_n}$ where $M_i = (T_i, A_i)$, and $\mec_\states = \Union_{i=1}^n T_i$ the set of all states in some MEC.
	Further, let $f: \mec(\Mdp) \mapsto [0,1]$ be a function which assigns a value to each MEC $M_i$.
	The \emph{weighted MEC quotient of $\Mdp$ and $f$} is the MDP $\Mdp^f = (\states^f, \initstate^f, \actions^f, \av^f, \trans^f, \rew^f)$ defined as follows.
	\begin{itemize}
		\item $\states^f = \states \setminus \mec_\states \union \set{\widehat{s}_1, \dots, \widehat{s}_n} \union \set{s_+, s_-}$.
		\item If for some $1 \leq i \leq n$ we have $\initstate \in T_i$, then $\initstate^f = \widehat{s}_i$; otherwise $\initstate^f = \initstate$.
		\item $\actions^f =  \set{(s, a) \mid s \in \states, a \in \av(s)} \union \set{\mathrm{stay}}$.
		\item $\av^f$ is defined as:
		\begin{align*}
			\forall s \in \states \setminus \mec_\states.~& \av^f(s) = \set{ (s, a) \mid a \in \av(s) }, \\
			\forall 1 \leq i \leq n.~& \av^f(\widehat{s}_i) = \set{ (s, a) \mid s \in T_i \land a \in \av(s) \setminus A_i } \union \set{\mathrm{stay}}, \\
			& \av^f(s_+) =\av^f(s_-) = \emptyset,
		\end{align*}
		\item $\trans^f$ is defined as follows.
		For $s \in \states \setminus \mec_\states$ we define the transition function just as in the original MDP
		\begin{equation*}
			\forall s \in \states \setminus \mec_\states, a \in \av(s), t \in \states.~\trans^f(s, (s, a), t) = \trans(s, a, t).
		\end{equation*}
		For the MEC representative states $\widehat{s}_i$, we instead distinguish multiple cases for the target state.
		To this end, let $(s, a) \in \av(\widehat{s}_i)$. For $t \in \states^f \setminus \mec_\states$, we define
		\begin{equation*}
			\trans^f(\widehat{s}_i, (s, a), t) = \trans(s, a, t).
		\end{equation*}
		If instead $t = \widehat{s}_j$, i.e.\ the representative of another MEC, we define
		\begin{equation*}
			\trans^f(\widehat{s_i}, (s, a), \widehat{s}_j) = \sum_{s' \in T_j} \trans(s, a, s'). \vspace{-0.5em}
		\end{equation*}
		Now, only the special case of the $\mathrm{stay}$ action remains, which we define as $\trans(\widehat{s_i}, \mathrm{stay}) = \set{s_+ \mapsto f(M_i), s_- \mapsto 1 - f(M_i)}$.
		\item $\forall s \in \states^f, a \in \av^f(s).~\rew^f(s,a) = 0$.
	\end{itemize}
\end{definition}
Intuitively, the weighted quotient is obtained by the following steps.
First, all states in the same MEC are merged into one representative state while preserving all transitions between different MECs.
Furthermore, special states $s_+$ and $s_-$ are added to the quotient.
On each state corresponding to some MEC $M$, a distinct \emph{stay} action with transitions to $s_+$ and $s_-$ is available.
This action corresponds to \enquote{committing} to this MEC and acquiring the value $f(M)$.
To achieve that, the transition probabilities to these two special states are chosen proportional to $f(M)$.
Intuitively, reaching $s_+$ corresponds to obtaining a value of 1 and dually $s_-$ corresponds to obtaining 0.
With this construction, we are able to express the maximal obtainable $f(M)$
\begin{equation*}
	\max_{\straa \in \straas^{\mathsf{MD}}} \sum_{M \in \mec(\Mdp)} \pr^\straa_s[\Diamond \Box M] \cdot f(M) \vspace{-0.5em}
\end{equation*}
as the reachability of $s_+$ in $\Mdp^f$, i.e.\ $\max_{\pi \in \straas^{\mathsf{MD}}} \pr_s^\straa[\Diamond s_+]$ \cite{cav}.
In our specific case, we define $f(M) = \gain^*(M) / \rmax$ with $\rmax = \max_{s \in \states, a \in \av(s)} \rew(s, a)$ and rescale the resulting reachability by $\rmax$, giving us the desired result.

\section{Proofs} \label{sec:proofs}

\subsection{Proofs for Algorithm~\ref{alg:scc_eval}} \label{sec:proof_scc_eval}

We first show that all the linear equation systems occurring in the algorithm are square and have a unique solution.
\begin{proof}
	There are three types of equation systems in the algorithm, (i)~the BSCC evaluation in Line~\ref{alg:scc_eval:bscc_eval}, (ii)~the gain-reachability in Line~\ref{alg:scc_eval:gain_reach} and finally (iii)~the bias determination in Line~\ref{alg:scc_eval:bias_values}.
	The equation systems of type (i) involve $1 + \cardinality{B_i}$ variables, namely one gain variable for the whole BSCC and one bias variable per state and also $1 + \cardinality{B_i}$ equations, giving a square matrix.
	Uniqueness follows from Condition~9.2.3 in \cite[Sec.~9.2.1]{Puterman}.

	Similarly, the systems of type (ii) and (iii) involve $\cardinality{S_i}$ variables and $\cardinality{S_i}$ equations.
	Both types are of the form $x(s) = \expsucc^{S_i}(x, s) + \mathsf{rhs}(s)$.
	A solution of (ii) corresponds to the constrained reachability $S_i~\mathsf{U}~\mathsf{succ}(\gain)$ as defined in~\cite[Sec.~10.1.1]{BaierBook}.
	By simple modifications, we conclude uniqueness of the solution by~\cite[Rem.~10.18]{BaierBook}.
	\qed
\end{proof}
Now, we prove correctness of the results returned by the algorithm.
\begin{proof}
	Firstly, we show that all the equations are well defined, i.e.\ any value which is used as a constant in the equation systems was already determined in a previous step.
	Clearly, (i) satisfies this, as $\rew(s)$ is given for all states.
	For equation systems (ii) and (iii) note that as we are only considering states in $\states^<$.
	These have both $\gain$ and $\bias$ set, as we required reverse topological order of the SCCs.

	To show correctness of the results, assume for contradiction that the returned values $(\gain, \bias)$ are erroneous.
	Furthermore, let $(\gain', \bias')$ be the \emph{unique} solutions of the gain/bias equations together with Condition~9.2.3 from \cite[Sec.~9.2.1]{Puterman} (numbering the states appropriately).
	By Corollary~\ref{stm:compression_correct}, $(\gain, \bias)$ and $(\gain', \bias')$ agree on the BSCCs.
	This means that the error occurred while processing some non-BSSC state.
	Let $S_i$ be the first SCC in which such an error occurs, i.e.\ there is some $s \in S_i$ with $\gain(s) \neq \gain'(s)$ or $\bias(s) \neq \bias'(s)$.

	We now prove that $\gain(s)$ and $\bias(s)$ satisfy the gain/bias equations which yields the contradiction.
	First, note that by the reverse topological order of the SCCs we have $\trans(s, s') = 0$ for each $s' \in \states \setminus (S_i \union \states^<)$ and hence
	\begin{equation*}
		\expsucc(\gain, s) = \expsucc^{S_i}(\gain, s) + \expsucc^{\states^<}(\gain, s).
	\end{equation*}
	Furthermore, with $\mathsf{succs} := \set{s' \in \states^< \mid \exists s \in S_i.~\trans(s, s') > 0} = \Union_{\mathsf{\gain} \in \mathsf{succ\gain}} \mathsf{succ}(\mathsf{\gain})$, we have that
	\begin{align*}
		\expsucc^{\states^<}(\gain, s) & = \sum_{s' \in \states^<} \trans(s, s') \cdot \gain(s') = \sum_{s' \in \mathsf{succs}} \trans(s, s') \cdot \gain(s') \\
				& = \sum_{\mathsf{\gain} \in \mathsf{succ\gain}, s' \in \mathsf{succ}(\mathsf{\gain})} \trans(s, s') \cdot \gain(s') = \sum_{\mathsf{\gain} \in \mathsf{succ\gain}, s' \in \mathsf{succ}(\mathsf{\gain})} \trans(s, s') \cdot \mathsf{\gain} \\
				& = \sum_{\mathsf{\gain} \in \mathsf{succ\gain}} \left( \mathsf{\gain} \cdot \sum_{s' \in \mathsf{succ}(\mathsf{\gain})} \trans(s, s') \right) = \sum_{\mathsf{\gain} \in \mathsf{succ\gain}} \mathsf{\gain} \cdot (p_{\mathsf{\gain}}(s) - \expsucc^{S_i}(p_{\mathsf{\gain}}, s)).
	\end{align*}
	For the last equality, we used the characterization of $p_{\mathsf{\gain}}$ in Line~\ref{alg:scc_eval:gain_reach}.
	Using this, we arrive at
	\begin{align*}
		\expsucc(\gain, s) & = \expsucc^{S_i}(\gain, s) + \sum_{\mathsf{\gain} \in \mathsf{succ\gain}} \mathsf{\gain} \cdot (p_{\mathsf{\gain}}(s) - \expsucc^{S_i}(p_{\mathsf{\gain}}, s)) \\
				& = \sum_{\mathsf{\gain} \in \mathsf{succ\gain}} \mathsf{\gain} \cdot p_{\mathsf{\gain}}(s) + \expsucc^{S_i}(\gain, s) - \sum_{\mathsf{\gain} \in \mathsf{succ\gain}} \mathsf{\gain} \cdot \expsucc^{S_i}(p_{\mathsf{\gain}}, s) \\
				& = \gain(s) + \expsucc^{S_i}(\gain, s) - \sum_{\mathsf{\gain} \in \mathsf{succ\gain}} \mathsf{\gain} \cdot \expsucc^{S_i}(p_{\mathsf{\gain}}, s).
	\end{align*}
	Furthermore, by definition of $\gain(s)$ in Line~\ref{alg:scc_eval:gain_value}
	\begin{align*}
		\expsucc^{S_i}(\gain, s) & = \sum_{s' \in S_i} \gain(s') \cdot \trans(s, s') = \sum_{s' \in S_i} \left( \sum_{\mathsf{\gain} \in \mathsf{succ\gain}} p_{\mathsf{\gain}}(s') \cdot \mathsf{\gain} \right) \cdot \trans(s, s') \\
				& = \sum_{\mathsf{\gain} \in \mathsf{succ\gain}} \mathsf{\gain} \cdot \left( \sum_{s' \in S_i} p_{\mathsf{\gain}}(s') \cdot \trans(s, s') \right) = \sum_{\mathsf{\gain} \in \mathsf{succ\gain}} \mathsf{\gain} \cdot \expsucc(p_{\mathsf{\gain}}, s).
	\end{align*}
	Together, we obtain that $\gain(s) = \expsucc(\gain, s)$.
	By again employing the specific order of the SCCs, we also have that
	\begin{equation*}
		\expsucc(\bias, s) = \expsucc^{S_i}(\bias, s) + \expsucc^{\states^<}(\bias, s).
	\end{equation*}
	Hence, by definition of $\bias(s)$ in Line~\ref{alg:scc_eval:bias_values},
	\begin{equation*}
		\bias(s) = \expsucc(\bias, s) + \rew(s) - \gain(s).
	\end{equation*}
	Together with the uniqueness of the solution, we arrive at $\gain(s) = \gain'(s)$ and $\bias(s) = \bias'(s)$, contradicting the assumption.
	\qed
\end{proof}

\subsection{Proof of Theorem~\ref{stm:heuristic_si_correct} (correctness of Algorithm~\ref{alg:heuristic_si})} \label{sec:proof_heuristic_si_correct}

\begin{proof}
	\emph{Correctness}: Follows trivially from Theorem~\ref{stm:si_correct}, since any returned value is deemed optimal by the precise method used in Line~\ref{alg:heuristic_si:line:precise}.
	
	\emph{Termination}: We apply \cite[Theorem 9.2.6]{Puterman} to show termination of the algorithm.
	This theorem intuitively states that if there are strategies $\straa$ and $\straa'$ where either (i) in some state $\straa'$ improved the gain or (ii) for all states the gain is unchanged and in some state $\straa'$ improves the bias, then the gain never is decreased and, in case (i), the gain is strictly increased in some states or, in case (ii), the gain is not modified but the bias is strictly increased.
	Since there are only finitely many strategies these improvements can only occur finitely often.
	Now note that the modification of $\straa_n$ based on the gain approximation in Line~\ref{alg:heuristic_si:line:approx_gain_nonopt_update} is conservative.
	It only modifies the strategy if the improvement definitely increases the gain.
	Hence the pair $\straa_n, \straa_{n+1}$ satisfies the conditions of this theorem.
	The improvements based on the precise evaluation similarly satisfy these conditions (actually, it is used to prove termination of Algorithm~\ref{alg:si}).
	\qed
\end{proof}

\subsection{Proof of Theorem~\ref{stm:heuristic_local_si_correct} (correctness of Algorithm~\ref{alg:heuristic_local_si_gain})} \label{sec:proof_heuristic_local_si_correct}

\begin{proof}
	Let $\gain_n$ and $\gain_{n+1}$ be the gain of $\straa_n$ and $\straa_{n+1}$, respectively.
	We show that each state $s$ for which we change the strategy has $\gain_n(s) < \gain^{\max}_L(\straa_n)$ and $\gain_{n+1}(s) \geq \gain^{\max}_L(\straa_n)$.
	The claim then follows by the same reasoning used to prove Theorem~\ref{stm:heuristic_si_correct}.
	
	Clearly, for states $s \in S_-$ we have that $\gain_n(s) < \gain^{\max}_L(\straa_n)$ by construction of $S_-$.
	Let now $\Mc$ be the Markov chain induced by $\Mdp$ under $\straa_n$ and $\Mc'$ the one induced by $\straa_{n+1}$.
	By Lemma~\ref{stm:gain_attractor_equal} all states in the attractor $\attractor(B)$ of some BSCC $B$ have the same gain.
	Therefore, for any state $s \in B$ we have that $\max_{s \in \attractor(B)} \gain^{\straa_n}_L(s) \leq \gain^{\straa_n}(s) \leq \min_{s \in \attractor(B)} \gain^{\straa_n}_U(s)$, since $\gain^\approx$ is consistent.
	Hence, $\attractor(B) \intersection S_+ \neq \emptyset$ implies that $\attractor(B) \subseteq S_+$.

	Therefore, we define $\mathcal{B} = \set{B \in \bscc(\Mc) \mid B \subseteq S_+}$ the set of all BSCCs in $S_+$.
	Since no state can achieve strictly more than the upper bound among all BSCCs and the gain approximation is consistent, we have that $\gain^{\max}_L(\straa_n)$ is bounded by this upper bound and thus $\mathcal{B}$ is non-empty.
	As we don't change the strategy for any state in $S_+$ and point all states in $S_-$ towards it, $\mathcal{B}$ will be the set of BSCCs in $\Mc'$.
	Therefore, all states will have a gain of at least $\gain^{\max}_L(\straa_n)$ under strategy $\straa_{n+1}$.
	\qed
\end{proof}

\section{Bias approximations} \label{sec:bias_approximation_difficult}

In order to further improve the performance of the presented approximation method, one might apply the idea of approximate gain improvement to bias improvement, too.
Naively, this would mean changing the strategy based on a bias approximation over all actions which the gain approximation deemed roughly equal.
Unfortunately, this approach has two major problems.

As we cannot determine the set of gain optimal choices precisely, the bias improvement may actually change the strategy to an action with a lower gain.
This may result in switching to a strategy which already occurred, introducing cycling and non-termination of the algorithm.
During our investigation we indeed found a simple example where this happens, even with precise bias values, which we show below in Example~\ref{example:approximate_gain_cycling}.
A simple way to fix this is to only allow finitely many approximation-based bias improvements and eventually switching to the precise method.

Additionally, obtaining reasonably precise estimations for the bias seems tricky.
We give an intuition for this issue to motivate more research in this direction.

It is known that the bias corresponds to the total expected reward under the modified reward function $\rew'(s) = \rew(s) - \gain(s)$.
Moreover, one can pick a \enquote{reference state} for each BSCC and set $\bias(s) = 0$ for this particular state.
Then, the bias can also be computed as total expected reward under $\rew'(s)$ until reaching any of the reference states.
Assuming that we obtained a precise gain value, estimating the total reward would be tractable by a value iteration variant.

But obtaining a precise gain value requires us to solve linear equations involving the bias, too, so this approach is ruled out.
Instead, we have to deal with some $\varepsilon$-precise gain value.
But then, the approximation of $\rew'(s)$ potentially has an $\varepsilon$-error in each state.
This means that an $\varepsilon$-precise gain does not allow us to determine an $\varepsilon$-precise bias value by estimating the aforementioned total expected reward.

A possible idea to remedy this problem would be to estimate the average number of steps $n$ until reaching some of the reference states, which allows us to bound the error.
By then computing a $\frac{\varepsilon}{2 n}$-precise gain value, one could then obtain an $\varepsilon$-precise bias estimate.
Apart from the obvious drawback that $n$ is potentially very huge in some models, we furthermore lose advantages compared to value iteration.
In models where $n$ is small, value iteration converges very fast, since intuitively $n$ corresponds to the \enquote{propagation speed} of values through the model.
One of the main reasons why strategy iteration is considerably faster than value iteration on some models is that solving the equation systems is independent of this propagation speed.

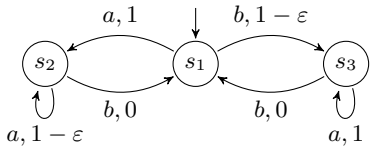
\begin{figure}
	\centering
	\begin{tikzpicture}[auto,initial text={}]
		\node[state,initial above] (state_0) at (0, 0) {$s_1$};
		\node[state] (state_1) at (-2, 0) {$s_2$};
		\node[state] (state_2) at (2, 0) {$s_3$};
		
		\path[->]
		(state_0) edge[bend right,swap]  node {$a, 1$} (state_1)
		(state_0) edge[bend left]        node {$b, 1-\varepsilon$} (state_2)
		(state_1) edge[loop below]       node {$a, 1-\varepsilon$} (state_1)
		(state_1) edge[bend right,swap]  node {$b, 0$} (state_0)
		(state_2) edge[loop below]       node {$a, 1$} (state_2)
		(state_2) edge[bend left]        node {$b, 0$} (state_0);
	\end{tikzpicture}
	\caption{An example MDP used to demonstrate how only using $\varepsilon$-precise gain values for improvement may lead to cycling.
		The notation is the same as in Figure~\ref{fig:bias_improvement_necessary}.}
	\label{fig:approximate_gain_cycling}
\end{figure}
\begin{example}[Potential cycling when using approximate gain values] \label{example:approximate_gain_cycling}
	By this example we demonstrate how using only approximate gain values may lead to cycling of the strategy iteration algorithm, even when the gain values are of some arbitrary precision $\varepsilon$ and precise bias values are available.
	
	To this end, consider the MDP specified in Figure~\ref{fig:approximate_gain_cycling}.
	Note that it consists of a single MEC.
	Consider starting with the strategy $\straa = (a, a, a)$.
	Then the gain approximation might return the values $\gain^\approx(\straa, s_1) = \gain^\approx(\straa, s_2) = (1-2\varepsilon, 1-\varepsilon)$ and $\gain^\approx(\straa, s_3) = (1, 1+\varepsilon)$.
	Thus, the gain improvement step switches to $\straa' = (b, a, a)$.
	
	Now, the gain approximation may return $\gain^\approx(\straa', s_i) = (1-\varepsilon, 1)$ for $i \in \set{1, 2, 3}$.
	As these values are inconclusive, no actions are changed due to gain improvement and the algorithm resorts to bias improvement.
	The precise bias values are given by $\bias_{\straa'}(s_1) = -\varepsilon$ and $\bias_{\straa'}(s_2) = \bias_{\straa'}(s_3) = 0$.
	Since the gain approximation for all states is exactly the same, the algorithm cannot deduce any information about the gain-optimal actions and setting $\av_{\gain}(s) = \set{a, b}$ for all states seems to be the only sensible choice.
	Consequently, the values of the bias improvement condition $r(s, a) + \expsucc(\bias, s, a)$ in state $s_1$ are $1 + 0$ for action $a$ and $(1 - \varepsilon) + 0$ for action $b$.
	Together, the favoured action is $a$.
	In the other two states, the bias improvement dictates to not change the action, since $0 + (-\varepsilon) < (1 - \varepsilon) + 0$ and $0 + (-\varepsilon) < 1 + 0$, respectively.
	Therefore the algorithm switches back to strategy $\straa$, which leads to cycling.
\end{example}

\section{Further experimental results} \label{sec:further_experiments}

\begin{table}[h]
	\centering
	\caption{Comparison of further variants on the presented models.
		We use the same notation as in Table~\ref{tbl:experiments}.
		The standard \texttt{SI} and our best variant $\texttt{SCC}$ are included again for reference.}
	\begin{tabu} to 0.75\linewidth {l|CC|CC}
		\textbf{Model} & \texttt{LP} & $\texttt{SCC}^M$ & \texttt{SCC} & \texttt{SI} \\
		\toprule
		cs\_nfail3 (184, 38)    &  \textbf{2} &           4 &            4 &          17 \\
		cs\_nfail4 (960, 176)   &  \textbf{5} &  \textbf{5} &   \textbf{5} &        1129 \\
		\midrule
		virus (809, 1)          &  \textbf{4} &  \textbf{4} &            5 &         $-$ \\
		\midrule
		phil\_nofair3 (856, 1)  &  \textbf{3} &           6 &            6 &         $-$ \\
		phil\_nofair4 (9440, 1) &          78 &          17 &  \textbf{15} &         $-$ \\
		\midrule
		sensors1 (462, 132)     &  \textbf{3} &           4 &            4 &         $-$ \\
		sensors2 (7860, 4001)   &         101 & \textbf{11} &           13 &         $-$ \\
		sensors3 (77766, 46621) &         $-$ &          53 &  \textbf{40} &         $-$ \\
		\midrule
		mer3 (15622, 9451)      &         $-$ &          20 &  \textbf{16} &         $-$ \\
		mer4 (119305, 71952)    &         $-$ &          54 &  \textbf{42} &         $-$ \\
		mer5 (841300, 498175)   &         $-$ &         $-$ & \textbf{474} &         $-$ \\
		\bottomrule
	\end{tabu}
	\label{tbl:further_experiments}
\end{table}

We provide a comparison of further solution methods in Table~\ref{tbl:further_experiments}.
$\texttt{SCC}^M$ denotes the SCC decomposition approach of Algorithm~\ref{alg:scc_eval} paired with the MEC decomposition as in Algorithm~\ref{alg:mec_dec_si}.
\texttt{LP} describes the LP-based mean-payoff solver MultiGain~\cite{DBLP:conf/tacas/BrazdilCFK15} (version 1.0.2) paired with the commercial LP solver Gurobi (version 7.0.1).
The table shows that for reasonably sized models, both of the additionally presented methods are significantly outperformed by our $\texttt{SCC}$ method.

\begin{table}[h]
	\centering
	\caption{Additional data gathered in the experiments.
		The first group shows the time taken for model construction and MEC decomposition, respectively.
		In the second group, some performance metrics of strategy iteration on these models are shown.}
	\setlength\tabcolsep{5pt}
	\begin{tabu} to \linewidth {l|cc|cccc}
		\textbf{Model} & Build & MEC & Steps & Gain & Bias & Changes \\
		\toprule
		cs\_nfail3 (184, 38)    &      0 &       0 &     1 &    0 &    0 &           0 \\
		cs\_nfail4 (960, 176)   &      0 &       0 &     1 &    0 &    0 &           0 \\
		\midrule
		virus (809, 1)          &      0 &       0 &     1 &    0 &    0 &           0 \\
		\midrule
		phil\_nofair3 (856, 1)  &      0 &       0 &  13/6 &  5/0 &  7/5 &   1785/1104 \\
		phil\_nofair4 (9440, 1) &      1 &       0 & 18/15 &  7/5 & 10/9 & 27461/18936 \\
		\midrule
		sensors1 (462, 132)     &      0 &       0 &     7 &    0 &    6 &         537 \\
		sensors2 (7860, 4001)   &      1 &       0 &    14 &    6 &    7 &        9289 \\
		sensors3 (77766, 46621) &      3 &       1 &    16 &    8 &    7 &       84721 \\
		\midrule
		mer3 (15622, 9451)      &      1 &       0 &    13 &    7 &    5 &        8188 \\
		mer4 (119305, 71952)    &      5 &       1 &    15 &    7 &    7 &       73264 \\
		mer5 (841300, 498175)   &     41 &      11 &    15 &    7 &    7 &      567899 \\
		\bottomrule
	\end{tabu}
	\label{tbl:further_data}
\end{table}

Further data of the experiments is provided in Table~\ref{tbl:further_data}.
The first two columns show the time taken for model construction and MEC decomposition in seconds, respectively.
For all models (except \textbf{mer5}), the time taken is negligible.

The following four columns describe, from left to right, the number of strategy evaluations, the number of gain and bias improvements, and finally the total number of changes to the strategy.
For the \textbf{phil\_nofair} models, we included the values for both of the checked properties separately.

We want to highlight the number of improvement steps performed by the algorithm, which in the worst case is exponential in the number of states.
Nevertheless, it is small for all performed experiments and moreover does not significantly increase for larger models of the same type.

Note that for \texttt{SI}, \texttt{BSCC}, and \texttt{SCC} the presented numbers are equal, since we did not change the underlying principles of the strategy iteration algorithm.
We observed an exception to this for the \textbf{phil\_nofair} models.
There, \texttt{BSCC} has a slightly different number of strategy changes compared to \texttt{SCC}.
We suspect that this is due to small rounding errors.

\end{document}